\newcommand{\stitle}[1]{\vspace{0.8ex}\noindent\textup{\textbf{#1}}}
\newtheorem{definition}{Definition}
\newtheorem{proof}{Proof}
\newtheorem{lemma}{Lemma}
\newcommand{\hide}[1]{}
\def\BibTeX{{\rm B\kern-.05em{\sc i\kern-.025em b}\kern-.08em
    T\kern-.1667em\lower.7ex\hbox{E}\kern-.125emX}}
\begin{document}

\title{Spatio-temporal flow patterns}


\author{
\IEEEauthorblockN{Chrysanthi Kosyfaki\IEEEauthorrefmark{1}, Nikos
  Mamoulis\IEEEauthorrefmark{2}, Reynold Cheng\IEEEauthorrefmark{1}, Ben Kao\IEEEauthorrefmark{1}}
\IEEEauthorblockA{\IEEEauthorrefmark{1}Department of Computer Science,
  University of Hong Kong,  \{kosyfaki,ckcheng,kao\}@cs.hku.hk}
\IEEEauthorblockA{\IEEEauthorrefmark{2}Dept. of Computer Science
  and Engineering, University of Ioannina, nikos@cs.uoi.gr}
}

\maketitle

\begin{abstract}
Transportation companies and organizations routinely collect huge
volumes of passenger transportation data.
By aggregating these data (e.g., counting the number of passengers
going from a place to another in every 30 minute interval), it becomes possible to 
analyze the movement behavior of passengers in a metropolitan area. 
In this paper, we study the problem of finding important trends in
passenger movements at varying granularities, which is useful in a wide
range of applications
such as target marketing, scheduling, and travel intent prediction.
Specifically, we study the extraction of movement patterns between
regions that have significant flow.
The huge number of possible patterns render their detection 
computationally hard. We propose algorithms that greatly reduce the search space and
the computational cost of pattern detection.
We study variants of patterns that could be useful to different
problem instances, such as constrained patterns and top-$k$ ranked patterns.

\end{abstract}


\section{Introduction}\label{sec:intro}
Consider a transportation company such as a metro system, which
routinely collects large volumes of data from its passengers,
regarding their entrance and exit points in the system and the times
of their trips.
Information of individual trips can be used in personalized services,
after obtaining consent from the passengers. Other than that, it is
hard to use such detailed data, mainly due to
privacy constraints.
On the other hand, aggregate information about passenger
trips can be valuable to the company, since it can provide estimates
and predictions about the
passenger flow between regions 
at different times of the
day and different days of the week. 

\stitle{Our contribution} In this paper, we study the problem of identifying interesting
{\em origin-destination-time} patterns of passengers, called ODT-patterns
for brevity,
at varying granularity. 
For this, we first use the application domain to define the finest
granularity of regions on the map
(e.g., each region corresponds to a
metro station) and also define the finest time
intervals of interest (e.g., divide the 24-hour time interval of a day into 48
30-minute timeslots).
We call these {\em atomic} regions and {\em
  atomic} timeslots, respectively.

Since the durations of all trips from a given
origin to a given destination at a given time are strongly correlated,
the time of reaching a destination can be inferred from the time when
the trip starts.
Hence, a trip can be described by its origin region, its destination
region, and the timeslot when the trip starts, i.e., as an $(o,d,t)$ triple. 
For all $(o,d,t)$ triples, where $o$ and $d$ are (different) atomic
regions and $t$ is an atomic timeslot, we measure the total number of
passengers who took a trip from $o$ to $d$ at time $t$.
The total flow of an $(o,d,t)$ triple
characterizes its importance; the triples with high flow are
considered to be important and they are called {\em atomic
  ODT-patterns} (we drop the ODT prefix whenever the context is clear).
In a {\em generalized} ODT-triple, denoted by $(O,D,T)$, $O$ and $D$
are sets of neighboring
atomic regions and $T$ consists of one or more consecutive timeslots.
An atomic $(o,d,t)$ triple is a component of an  $(O,D,T)$ triple if
$o\in O$, $d\in D$, and $t\in T$. 
$(O,D,T)$ is {\em non-atomic}, if it has more than one components, i.e.,
at least one of $O$, $D$, or $T$ is non-atomic.

Defining and finding important non-atomic patterns is more
challenging. One reason is that the number of possible atomic region
combinations that can form a generalized (i.e., non-atomic) region $O$
or $D$ is huge and it is not practical to consider all these
combinations and their flows.
At the same time, for a given generalized
ODT triple, it is hard to estimate the flow quantity that can be deemed
significant enough to characterize the triple an interesting
pattern.
To solve these issues, we follow a ``voting'' approach, where we
characterize an ODT triple as a pattern if at least a certain percentage
of its constituent $(o,d,t)$ triples are atomic patterns (i.e., they
have large enough flow). This allows us to design and use a
pattern enumeration algorithm,
which, starting from the atomic patterns,
identifies all ODT patterns progressively by synthesizing them
from less generalized ODT patterns.
We propose a number of optimizations to our algorithm, which
significantly reduce the time spent for generating candidate patterns
and counting their supports.

Despite our optimizations, ODT pattern enumeration can
still be expensive due to the essentially huge number of generated and counted
patterns even with relatively high support thresholds for atomic patterns. 
Given this,
we also study practical variants of ODT pattern
search.
We investigate the detection of patterns which are {\em constrained} to a
subset of regions and timeslots, which reduces the problem size and renders
pattern enumeration much faster.
Besides, this
allow us to define the
importance of flow
parametrically in a {\em fair} manner (i.e., by constraining pattern search
to under-represented regions).
We also study pattern detection by limiting the number of
atomic regions and timeslots that a pattern may have.
Finally, we define and solve the problem of finding the
top-ranked patterns at each granularity level.
We propose an efficient algorithm that outperforms the baseline approach
of finding all patterns at each level and then selecting the top ones
by a wide margin.

\stitle{Applications} Identifying spatio-temporal flow patterns finds
several applications, e.g., in transportation networks
\cite{gudmundsson2004efficient,
  DBLP:journals/tkde/LiuLJXDTZ23,DBLP:conf/kdd/WangYCW0019}, weather forecasting \cite{takafuji2020spatiotemporal, heng2020spatiotemporal},
social networks, etc \cite{DBLP:conf/adma/CunhaSR14,liu2021exploring}.
In transportation networks, 
detection of passenger movement patterns
can facilitate 
the handling of emergencies or incidents.
For instance in December 2021, there
was an accident in Hong Kong subway system.%
\footnote{https://www.thestandard.com.hk/breaking-news/section/4/183861/(Video)-MTR-door-flew-off,-disrupting-peak-hour-service}
As a result, scheduled trips were canceled and passengers had to be
served by other means (i.e., buses).
Spatio-temporal flow patterns could help in predicting the movement
needs and for scheduling on-demand transportation
for affected passengers. 
As another application, studying
the evolution of patterns can help in scheduling future trips more effectively.
Patterns can also help to understand the correlations between map
districts and perform target-marketing, cross-district advertisements,
or location planning.



\stitle{Outline} 
Section \ref{sec:relwork} reviews related work on spatio-temporal
pattern mining.
In Section \ref{sec:def}, we formally
define the problem we study in this paper. Section \ref{sec:algo}
presents an algorithm for extracting spatio-temporal flow patterns and
its optimizations.
In Section \ref{sec:ext}, we define interesting variants of
flow patterns and propose algorithms for their enumeration.
Section \ref{sec:exps} evaluates our
methods on real networks with different characteristics. Finally,
Section \ref{sec:conclusion} concludes the paper with a discussion
about future work.

\section{Related Work}\label{sec:relwork}
Spatio-temporal patterns are spatial events, correlations, or sequences
(trajectories) that repeat themselves over time.
Spatio-temporal pattern mining is a well-studied problem in the
literature
\cite{DBLP:conf/sdm/GiannottiNP06,DBLP:journals/tits/AsifDGOFXDMJ14,DBLP:conf/kdd/Morimoto01,DBLP:conf/kdd/ZhangMCS04,DBLP:conf/icsdm/YooB11,DBLP:journals/eswa/Yu16,DBLP:journals/datamine/HanPYM04,DBLP:conf/edbt/KosyfakiMPT19,DBLP:journals/gis/CaiK22,DBLP:conf/icde/KosyfakiMPT21,DBLP:journals/air/AnsariAKBM20,DBLP:conf/vldb/HadjieleftheriouKBT05,zhang2018predicting, DBLP:conf/wsdm/ParanjapeBL17,la2002spatio,tan2001finding},
where a number of different problem definitions and solutions are presented.

Agrawal and Srikant \cite{DBLP:conf/icde/AgrawalS95} introduced the
concept of sequential patterns over a database of customer
sales transactions. More specifically, the problem of
mining sequential  patterns is to find the maximal sequences of
itemsets (that appear together in a customer transaction) among all
those that have a certain user-specific minimum support. The
authors use three different algorithms to solve this problem and
evaluate their proposed techniques using synthetic data.

Extracting trajectory patterns from large graphs is a well-studied
problem in data mining
\cite{DBLP:conf/sigmod/NgC04,DBLP:reference/gis/GudmundssonLW08,
  DBLP:journals/jips/KangY10, DBLP:conf/ssd/KalnisMB05,DBLP:conf/sigmod/TaoFPL04}.
The main objective is to
find spatio-temporal patterns from raw GPS data, which can describe,
for example, frequent routes or passenger movements.
Giannotti et al. \cite{DBLP:conf/kdd/GiannottiNPP07}
extended the problem of mining sequential patterns in
trajectories. They define trajectory patterns as frequent behaviors in
both space and time.
They also propose algorithms for discovering regions of
interest,
to mine trajectory patterns with predefined regions
and reduce the complexity of the problem.
Cao et al. \cite{DBLP:conf/icdm/CaoMC05} studied the
problem of mining sequential patterns from spatiotemporal
data. They defined patterns as sequences of spatial regions,
they define regions using clustering, and then identify
sequential patterns of regions that repeat themselves over time.
They also
proposed a substring tree, a fast approach for extracting longer
patterns. 
Choi et al., \cite{DBLP:journals/pvldb/ChoiPH17} introduce a tool
for discovering all regional movement patterns in semantic
trajectories.
They design an algorithm called RegMiner (Regional Semantic Trajectory
Pattern Miner)
which is capable of finding movement patterns that can be frequent
only in
specific regions and not in the entire space. By doing this, they
automatically
reduce the search requirements and identify more interesting patterns.
Fan et al. \cite{DBLP:journals/pvldb/FanZWT16}
provide scalable trajectory mining methods using Apache Spark.
Pattern mining in graph streams has been studied in
\cite{DBLP:journals/pvldb/AggarwalLYJ10}, where the authors propose 
probabilistic algorithms.
The goal is to develop a summarization of the graph stream which can
then be used as input to the mining problem.
They use a min-hash approach for extracting patterns 
efficiently.


Our problem is quite different compared to previous work on
trajectory, sequence, and graph mining.
First, we are not interested in finding frequent paths
(subsequences, subgraphs), but in finding hot combinations of trip origins,
destinations, and timeslots. Second, we do not search for patterns at
the finest granularity only, but looking for patterns where any of
the three ODT components are generalized. Furthermore, we only have
a weak monotonicity property when generalizing detailed patterns,
which means that the classic Apriori algorithm (and its variants) \cite{DBLP:conf/sigmod/AgrawalIS93,DBLP:journals/tkde/HanF99,DBLP:conf/ssd/KoperskiH95,DBLP:books/mit/fayyadPSU96/AgrawalMSTV96,DBLP:reference/gis/X08bq}
cannot be readily applied to solve our problem.

Mining traffic flow patterns is a problem that recently attracted a
lot of attention  
\cite{DBLP:journals/tkde/GongLZLZ22,
  DBLP:journals/tkde/LiuLJXDTZ23,
DBLP:conf/kdd/WangYCW0019,
DBLP:journals/tits/DuPWBWGLL20,xie2020urban}.
Liu et al. \cite{DBLP:journals/tkde/LiuLJXDTZ23}  studied the
problem
of extracting traffic flow knowledge
from transportation data, i.e., pairs of POIs on the map that have
significant traffic flow between them.
Wang et al.,
\cite{DBLP:conf/kdd/WangYCW0019} develop a model for predicting the
flow density in different regions. To do this, they represent a city as
a grid and take into consideration regions that may have a significant
amount of flow compared to other less important regions. Although we
also measure flow between regions, we are not interested
in predicting the flow distribution. Moreover, none of these previous
works studies flow patterns at {\em different granularities}, where regions
and time periods may consist of multiple atomic elements; hence,
previous works cannot be applied to solve our problem. 



\section{Definitions}\label{sec:def}
In this section, we formally define ODT patterns and the graph
wherein they are identified.
Table \ref{tab:notations} shows the
notations used frequently in the paper.

\begin{table}[ht]
\caption{Table of notations}\label{tab:notations}
\centering
\footnotesize
\begin{tabular}{|c|c|}
\hline
Notations & Description \\
\hline  
$G(V,E)$ & region neighborhood graph  \\
$r_i$ & atomic region \\
$R_i$ & region \\
$t_i$ & atomic timeslot \\
$T_i$ & timeslot\\
$P$ & atomic ODT pattern or triple \\
$P.O$ & pattern/triple origin \\
$D$ &  pattern/triple destination \\
$T$ &  pattern/triple timeslot \\
$\sigma(P)$ & support of atomic ODT pattern $P$ \\
$P$.cnt & number of atomic patterns in ODT pattern $P$ \\
$\mathcal{P_\ell}$/$\mathcal{T_\ell}$ & Set of ODT patterns/triples at
  level $\ell$\\
\hline         
\end{tabular}
\end{table}

The main input to our problem is a {\em trips} table,
which
records information about trips from origins to destinations at
different times. Each origin/destination is a 
minimal region of interest on a map (e.g., a
district, a metro station, etc.), called {\em atomic region}.
Let $V$ be the set of all atomic regions.
An undirected neighborhood graph
$G(V,E)$ defines the neighboring relations between atomic regions;
there is an edge
$(v,u)$ in $E$ iff $v\in V$ and $u\in V$ are neighbors on the
map.
Finally, the timeline is divided into periods that repeat themselves (e.g., 24-hours each) 
and each period is discretized into time ranges (e.g., 48 30-minute slots).
Each such minimal time range is called {\em atomic} timeslot.
Periods can also be classified (e.g., weekdays vs. weekends); hence
atomic timeslots may refer to different period classes (e.g.,
8:00-8:30 on weekdays).
Figure \ref{fig:region} shows an exemplary region neighborhood graph with four atomic regions (districts or stations) as vertices and Figure
\ref{fig:table} shows a (snapshot of a) trips table,
which includes individual trips that have taken place between these regions.

\begin{figure}[h]
	\centering
	\subfigure[Region graph]{
		\label{fig:region}
		\includegraphics[width=0.27\columnwidth]{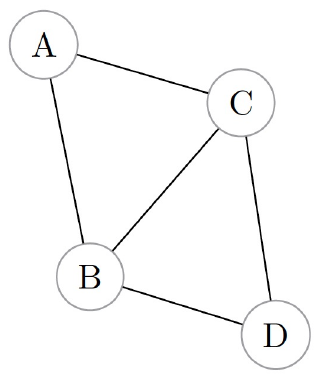}}
	\hspace{4mm}
	\subfigure[Trips table]{
	\label{fig:table}
	\includegraphics[width=0.5\columnwidth]{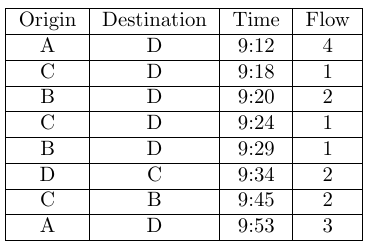}}
	\caption{Example of input data}
	\label{fig:example_def}
\end{figure}


  \begin{definition} [Region/Timeslot] \label{def:eregion}
 A  region $r$ is a subset $V'$ of $V$, such that the
induced subgraph $G'(V',E')$ of $G$ is connected. A
timeslot $T$ is a continuous sequence of atomic timeslots.
  \end{definition}



\begin{definition} [Generalization of a region/timeslot] \label{def:genrt}
 A region $R_1$ is a generalization of region $R_2$ iff
 $R_2\subset R_1$. A timeslot $T_1$ is a 
generalization
of timeslot $T_2$ iff
$T_2\subset T_1$.
\end{definition}

\begin{definition} [Minimal generalization of an region/timeslot] \label{def:mingenrt}
 A region $R_1$ is a minimal generalization of region $R_2$ iff
 $R_2\subset R_1$
and  $R_1 - R_2$ is an atomic region. A timeslot $T_1$ is a minimal
generalization
of timeslot $T_2$ iff
$T_2\subset T_1$ and $T_1 - T_2$ is an atomic timeslot.
\end{definition}

For example, region $\{B,C,D\}$ is a minimal generalization of $\{B,D\}$.
Symmetrically,
$\{B,D\}$ is a  minimal specialization of $\{B,C,D\}$.

We found that the start and end times of individual trips between the same origin
and destination are strongly correlated.
Specifically, we computed the mean
absolute deviation
(MAD) of trip durations
between all origin-destination pairs and for all timeslots of the
origin time
in both of our real networks (taxi and metro network)
that we use in our experimental evaluation.
MAD is defined as $\frac{\Sigma_{x\in X}\mid{x-\mu}\mid}{|X|}$, where
$X$ is the set of samples and $\mu$ is their averages.
The computed MAD for taxi and metro network is $0.10$ and $0.06$ respectively.
Hence, we can map each trip in the trips table
to an {\em atomic ODT triple} $(o,d,t)$, where
$o$ is the origin region of the trip (if the origin is a GPS location, it can
be mapped to the nearest $v\in V$), $d$ is the destination region of
the trip, and $t$ is the atomic timeslot that contains the
origin time of the trip.%
\footnote{Our definitions and techniques can be extended for data inputs
where there is no correlations between the begin and end times of
trips; in this case, we should map trips to $(o,d,st,et)$ quadruples
and patterns should be ODSE quadruples.}

\begin{definition} [Atomic ODT triple] \label{def:atriple}
  A triple $(o,d,t)$ is atomic if:
  \begin{itemize}
  \item $o$ is an atomic region
  \item $d$ is an atomic region
  \item $t$ is an atomic timeslot
  \item $o \neq d$
  \end{itemize}
\end{definition}

Given an atomic ODT triple $P$, the
{\em support} $\sigma(P)$ of $P$ is the total number of passengers
(flow) of the trips that are mapped to $P$.
For example, the top-left of Figure \ref{fig:detailed} shows a map and
an individual trip in it, which corresponds to the first trip in
Figure \ref{fig:table}. The top-right of Figure \ref{fig:detailed} has
the {\em aggregated trips table}, which contains all atomic $(o,d,t)$
triples, after aggregating all trips that correspond to the same
$(o,d,t)$. For instance, trips ($B$, $D$, 9:20, 2) and ($B$, $D$,
9:29, 1) are merged to triple $(B,D,18)$ with total flow $3$.%
\footnote{All time moments between
  9:00 and 9:30 are generalized to timeslot 18, which is the 18th slot
  in 30-minute intervals, starting from 00:00-00:30 (mapped to 0).}
Next, we define generalized (i.e., non-atomic) ODT
triples.




\begin{definition} [ODT triple] \label{def:etriple}
An ODT triple $(O,D,T)$ consists of a region $O$, a region $D$, and a
timeslot $T$, such that $O\cap D = \emptyset$.
\end{definition}

\begin{definition} [ODT triple generalization] \label{def:gpattern}
  An ODT triple $P_1$ is a generalization of ODT triple $P_2$ if
for each $X\in \{O,D,T\}$, either $P_1.X=P_2.X$ or $P_1.X\subset
P_2.X$,
and for at least one $X\in \{O,D,T\}$, $P_2.X\subset
P_1.X$.
\end{definition}

\begin{definition} [Minimal generalization of ODT triple] \label{def:mingen}
  An ODT triple
  $P_1$ is a minimal generalization of ODT triple $P_2$ if one of the
  following holds:
    \begin{itemize}
  \item $P_1.O=P_2.O$, $P_1.D=P_2.D$ and $P_1.T$ is a minimal generalization of $P_2.T$  \item $P_1.D=P_2.D$, $P_1.T=P_2.T$ and $P_1.O$ is a minimal generalization of $P_2.O$  \item $P_1.O=P_2.O$, $P_1.T=P_2.T$ and $P_1.D$ is a minimal generalization of $P_2.D$ \end{itemize}
\end{definition}


\begin{figure}[h]
      \centering
      \includegraphics[width=1.00\columnwidth]{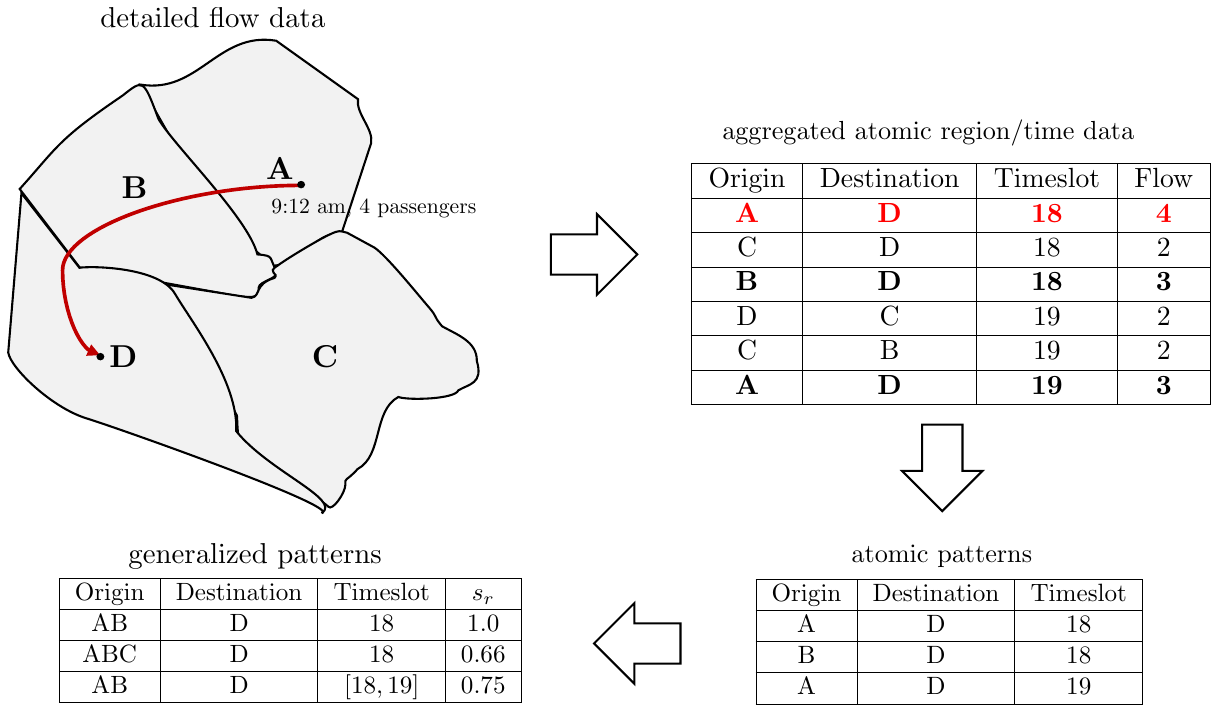}
      \caption{A detailed example}
      \label{fig:detailed}
\end{figure}


We now turn to the definition of ODT patterns; we start by atomic ODT
patterns and then move to the generalized  ODT patterns.

\begin{definition} [Atomic ODT pattern] \label{def:atomicpattern}
Let $AT_r$ be the set of atomic ODT triples with non-zero support.
Given a threshold $s_a, 0< s_a\le 1$, an
atomic ODT triple $P$ is called an atomic ODT pattern if $\sigma(P)$ is
in the top $s_a\times |AT_r|$ supports of triples in $AT_r$.
\end{definition}

Figure \ref{fig:detailed} (bottom-right) shows the atomic ODT patterns for our running example if $s_a=0.5$.
The above definition considers a global support threshold for
characterizing an atomic triple as a pattern,  following the typical
approach in data mining.

\begin{definition} [ODT pattern] \label{def:pattern}
An ODT pattern $P$ is an ODT triple where:
\begin{itemize}
\item the ratio of atomic triples in $P$, which are atomic patterns, is at
least equal to a minimum ratio threshold $s_r$
\item there exists a minimal specialization of $P$ which is an ODT pattern  
\end{itemize}  
\end{definition}

The number of atomic triples in $P$, which are atomic patterns is
denoted by  $P$.cnt.
In the example of Figure \ref{fig:detailed}, 
if $s_r=0.6$, 
$(AB,D,18)$ is a (generalized) ODT pattern where origins $A$ and $B$
have significant joint flow to destination $D$ at timeslot $t=18$,
because the pattern includes two out of three atomic patterns. The
second condition of Def. \ref{def:pattern} is a {\em sanity} constraint,
which prevents a generalized triple $P$ from being characterized as a pattern
if there is no minimal specialization of $P$ that is also a pattern;
intuitively, a pattern should have at least one minimal specialization which
is also a pattern (weak monotonicity).

A pattern (triple) $P$ is said to be level-$\ell$ pattern (triple) if the total number of
atomic elements in it (regions and timeslots) is $\ell$. Hence, atomic
patterns are level-3 patterns, since they contain exactly 3 elements
(i.e., two atomic regions and one atomic timeslot). Similarly, triple
$(A, BC, [1,3])$ is a level-6 triple because it includes 1 atomic
region in its origin, 2 atomic regions in its destination, and 3
atomic regions in its time-range (note that $[1,3]$ includes atomic
timeslots $\{1,2,3\}$).


\section{Pattern Extraction}\label{sec:algo}
To find the ODT patterns, 
we first start by finding the atomic ODT patterns, i.e., the $(o,d,t)$
triples which are frequent/significant, where $o$ and $d$ are atomic
regions and $t$ is an atomic timeslot.
This is trivial and can be done by one pass over the aggregated trips data, where the
occurrence of each $(o,d,t)$ triple is unique and by selecting the top
$s_a$ ratio of them as atomic patterns. 
Our most challenging task is then to define an algorithm that
progressively synthesizes non-atomic patterns from atomic patterns and
prunes the search space effectively.

Recall that a non-atomic triple $P=(O,D,T)$ is a pattern
if at least a ratio $s_r>0$ of its included atomic triples
are patterns. Hence, by definition, a non-atomic pattern generalizes at
least one atomic pattern $(o,d,t)$.
The
pattern synthesis algorithm uses the set of atomic patterns
and the
region neighborhood graph $G$ to synthesize the non-atomic patterns.
Given an existing $(O,D,T)$ pattern $P$ of size $k$, we
attempt a minimal generalization of $P$ by including into the
set $O$ a neighboring atomic region to the existing regions in $O$, or
doing the same for set $D$, or adding an atomic neighboring timeslot
to $T$.

The challenge is to
prune candidate generalizations that cannot be patterns.
For this, we need a fast way to compute (or bound) the number of
contributing (newly added to $P$) atomic patterns to the ratio of the
candidate.



\subsection{Baseline Algorithm}\label{sec:baseline}
We now present a baseline algorithm for enumerating all the atomic and
extended ODT patterns in an input graph $G(V,E)$.
The first step of Algorithm \ref{algo:baseline} is to scan all trips
data and compute the support counts of all atomic triples $\mathcal{T}_3$.
Then, it finds the set $\mathcal{P}_3$ of atomic patterns, i.e., the triples
having support count at least equal to
$minsup$, which is the support
count of the $s_a\cdot|\mathcal{T}_3|$-th triple in $\mathcal{T}_3$ with the highest
support.
All triples (patterns) in $\mathcal{T}_3$ ($\mathcal{P}_3$) have exactly
three atomic elements (regions or timeslots).
The algorithm progressively finds the patterns with more atomic
elements.
Recall that a triple (pattern) $P$ is at level $\ell$, i.e., in set $\mathcal{T}_\ell$
($\mathcal{P}_\ell$) if it has $\ell$ atomic elements; we also call $P$ an
$\ell$-size triple (pattern).
Candidate patterns $CandP$ at level $\ell+1$ are generated by either adding an atomic
region at $O$ or an atomic region at $D$ or an atomic timeslot at $T$,
provided that the resulting triple is valid according to Definition \ref{def:etriple}.
If a $CandP$ has been considered before, it is disregarded. This may
happen because the same triple can be generated from two or more
different triples at level $\ell$. For example, candidate pattern
$(AB,C,1)$ could be generated by pattern $(A,C,1)$ (by extending
region $A$ to region $AB$) and by $(B,C,1)$ (by extending
region $B$ to region $AB$).
Hence, we keep track at each level $\ell$ the set of triples that
have been considered before, in order to avoid counting the same
candidate twice.%
\footnote{Since a pattern at level $\ell+1$ requires at least one
  and not all its minimal specializations to be patterns, an id-numbering
  scheme for atomic regions, which would extend patterns by only
  adding elements that have larger id would not work. For example, if
  both $(A,C,1)$ and  $(B,C,1)$ are patterns, $(AB,C,1)$ can be
  generated by both of them; however, if just $(B,C,1)$ is a pattern,
  $(AB,C,1)$ can only be generated by  $(B,C,1)$.}

To check whether a candidate $CandP$ not considered before is a
pattern,
we need to divide the number $CandP$.cnt of atomic patterns included in $CandP$
by the total  number $CandP$.card of atomic triples in $CandP$.
If this ratio is at least $s_r$, then $CandP$ is a pattern.
$CandP$.card can be computed algebraically: it is the product of
atomic elements in each of the three ODT components. For example,
$(AB,CD,[1,3])$.card = $2\cdot 2\cdot 3=12$ because there are 12
atomic triples in $(AB,CD,[1,3])$, i.e., combinations of elements
$\{A,B\}$,  $\{C,D\}$, and $\{1,2,3\}$.
To compute $CandP$.cnt fast, we can take advantage of the fact that we
already have $P$.cnt, i.e., the number of atomic patterns in the
generator pattern. We only have to compute the $P'$.cnt for the
{\em difference} $P'=CandP-P$ between $CandP$ and $P$, which is the
triple consisting of the extension element in the extended dimension
(one of O, D, T) together with the element-sets in the intact
dimensions (two of O, D, T).
For example, if $P=(A,CD,[1,2])$ and $CandP=(AB,CD,[1,2])$, then
$P'=(B,CD,[1,2])$.
To compute $P'$.cnt,
Algorithm \ref{algo:baseline} enumerates all atomic triples in $P'$ to
check whether they are patterns.
It then sums up $P$.cnt and $P'$.cnt to derive $CandP$.cnt.


\begin{algorithm}
\begin{algorithmic} [1]
  \small
\Require a region graph $G(V,E)$; a trips table;
a minimum support $s_a$ for atomic ODT patterns; a minimum support
ratio $s_r$ for
non-atomic ODT patterns 
\State $\mathcal{T}_3$ = atomic triples computed from trips table
\State $\mathcal{P}_3$ = triples in $\mathcal{T}_3$ with support $\geq s_a$
\For{all atomic triples $P\in \mathcal{T}_3$}
       \State $P$.cnt = 1 if $P\in \mathcal{P}_3$, else $P$.cnt=0
 \EndFor   
\State $\ell$ = 3
\While {$|\mathcal{P_\ell}| > 0$}  \Comment{$\mathcal{P}_\ell$ = set of
  level-$\ell$ patterns}          
     \State $\mathcal{P}_{\ell+1}$ = $\emptyset$   \Comment{Initialize pattern set at level
       $\ell+1$}
     \For{each $P$ in $\mathcal{P}_\ell$}
         \For{each minimal generalization $CandP$ of $P$}
               \If{$CandP$ not considered before}
                      \State $P'$= $CandP-P$ 
                      \State $CandP$.cnt = $P$.cnt + $P'$.cnt
                     \If{$CandP$.cnt $/$ $CandP$.card $\geq s_r$}
                           \State add $CandP$ to $\mathcal{P}_{\ell+1}$
                       \EndIf
                 \EndIf
            \EndFor
     \EndFor
     \State $\ell$ = $\ell$ + 1       
\EndWhile                                 
\end{algorithmic}
\caption{Baseline Algorithm for finding all ODT patterns}
\label{algo:baseline}
\end{algorithm}

Figure \ref{fig:lattice} exemplifies the pattern enumeration
process in our running example (see Figure \ref{fig:detailed}).
Atomic pattern $P=(A,D,18)$ can be generalized by adding to the origin
any of the neighbors of atomic region $A$, to the destination any of the neighbors
of atomic region $D$, and to timeslot 18 either timeslot 17 or
timeslot 19. Each of these generalization forms a candidate pattern
$CandP$ at level 4. Counting the support of these candidates requires
counting only the difference $P'$. For example, to count the support
of $(AB,D,18)$, we only have to add to the support of $P=(A,D,18)$ the
support of $P'=(B,D,18)$, which is 1. Then, the support of $(AB,D,18)$
is found to be 2. Assuming that $s_r=0.6$, $CandP=(AB,D,18)$ is a pattern,
since the ratio of atomic patterns in it is $1.0\ge s_r$. All patterns
that stem from $P=(A,D,18)$ up to level 5 are emphasized in Figure
\ref{fig:lattice}; these are used to generate candidate patterns at
the next levels. 

\begin{figure}[h]
      \centering
      \includegraphics[width=1.04\columnwidth]{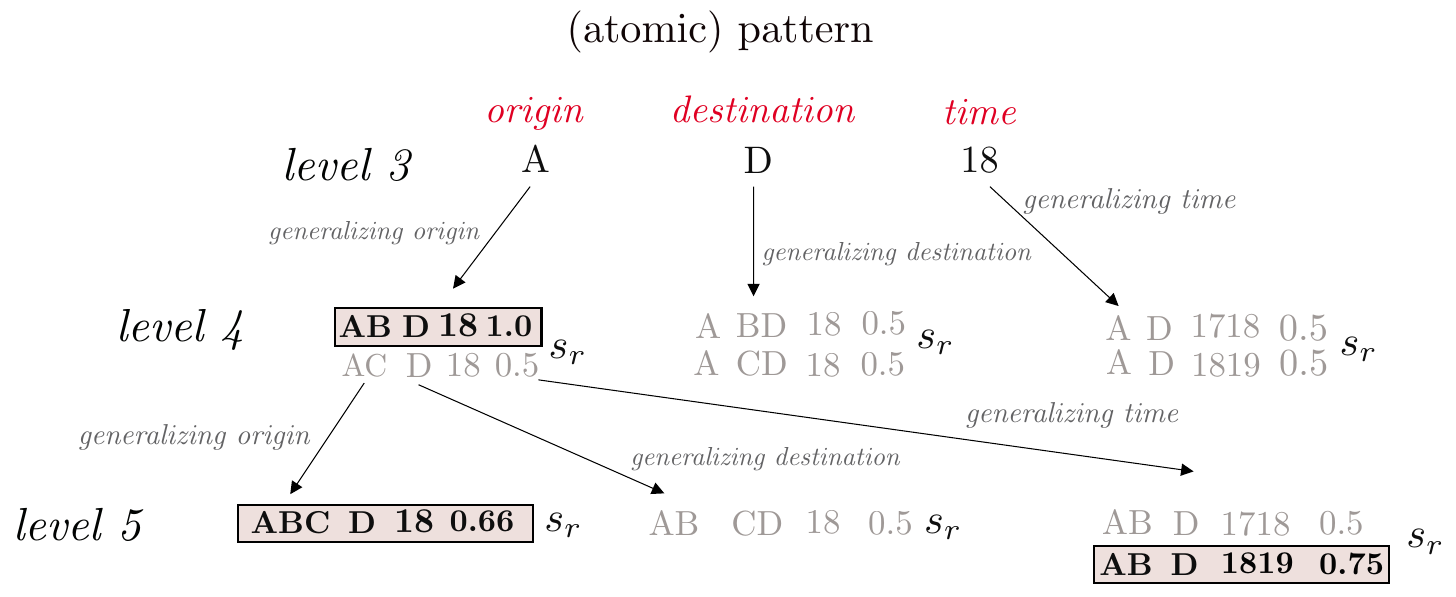}
      \caption{Pattern enumeration example}
      \label{fig:lattice}
\end{figure}

\stitle{Complexity Analysis}
In the worst case, all valid combinations of regions and timeslots can
be considered as candidate patterns.
In other words, each subset of $V$ can be the union of O and D and for
each such subset of size $k$ can be split in $2^k-2$ ways.
Hence, the number of possible OD pairs is 
$\sum_{k=1}^{|V|-1}{|V| \choose k}(2^k-2)$.
If $S$ is the number of atomic timeslots, then the number of
possible (generalized) timeslots to be included in a candidate pattern
is $2^{|S|}$. Hence, the worst-case space/time complexity of ODT pattern
enumeration is $O\left
  (2^{|S|}\sum_{k=1}^{|V|-1}{|V| \choose k}(2^k-2)\right )$.
The complexity increases exponentially with the number of
atomic regions and the number of atomic timeslots, rendering the
problem particularly hard. In following, we propose methods
that reduce the complexity in practice.

\subsection{Optimizations}\label{sec:opt}
We now discuss some optimizations to the baseline algorithm, which can
greatly enhance its performance.

{\bf Avoid re-counting $P'$.}
The first approach is based on {\em caching} the ODT triples that have
been counted before. Instead of computing $P'$.cnt directly for
$P'=CandP-P$, we first check whether $P'$.cnt is already
available. This requires us to cache the counted triples and their
supports at each level in a hash table. Hence, before counting $P'$,
we first search the hash table, which caches the triples of size
$|P'|$ to see if $P'$ is in there. In this case, we simply use
$P'$.cnt instead of computing it again from scratch.

{\bf Fast check for zero support of $P'$.}
The second optimization is based on the observation that for some
pairs $(o,d)$ of atomic regions,
there does not exist any timeslot $t$, such that
$(o,d,t)$ is an atomic pattern in $\mathcal{P}_3$. For example, if $o$ and $d$
are remote regions on the map, it is unlikely that there is
significant passenger flow that connects them at any time of the day.
We take advantage of this to avoid counting any $P'$ which may not
include atomic patterns. Specifically, for each atomic region $r$, we
record (i) $r.dests$, the set of atomic regions $r'$, such that there
exists a $(r,r',t)$ pattern in $\mathcal{P}_3$; and (ii)
$r.srcs$, the set of atomic regions $r'$, such that there
exists a $(r',r,t)$ pattern in $P_3$.
If, in Algorithm $\ref{algo:baseline}$, $CandP$ is a minimal
generalization of $P$, by expanding $P.O$ to include a new atomic
region $r$, then $P'=CandP-P$ should only include $r$ in $P'.O$. If
$P'.D\cap r.dests = \emptyset$, then $P'$ does not include any atomic
patterns and $CandP$.cnt is guaranteed to be equal to $P$.cnt. Hence,
we can skip support computations for $P'$.
Symmetrically, if $CandP$ is a minimal
generalization of $P$, by expanding $P.D$ to include a new atomic
region $r$, then $P'=CandP-P$ should only include $r$ in $P'.D$. If
$P'.O\cap r.srcs = \emptyset$, then $CandP$.cnt is guaranteed to
be equal to $P$.cnt.
Overall, by keeping track of $r.dests$ and $r.srcs$ for each atomic
region $r$, we can save computations when counting the supports of patterns. 
Since the space required to store $r.dests$ and $r.srcs$ is $O(|V|)$
in the worst case, the total space complexity of these sets is
$O(|V|^2)$. This cost is bearable, because our problem typically
applies on transportation networks or district neighborhood graphs in
urban maps,  where the number of vertices in $V$ is rarely large.  


{\bf Improved neighborhood computation.}
The minimal generalizations of a pattern $P$ are generated by
minimally generalizing $P.O$, $P.D$, and $P.T$.
The generalization of $P.T$ is trivial as we add one atomic
timeslot before the smallest one in  $P.T$ or after the largest one in
$P.T$.
On the other hand, computing the minimal generalizations of a region
$R$ (i.e., $P.O$ or $P.D$) can be costly if done in a brute-force
way. The naive algorithm tries to add to $P.O$ all possible neighbors of each
atomic region $r\in R$ and for each such neighbor not in $P.O$ and
$P.D$ it measures the support of the corresponding generalized
pattern $P'$, if $P'$ was not considered before. Since the same $P'$
can be generated by multiple $P$,  checking whether  $P'$ has been
considered before can be performed a very large number of times with a
negative effect in the runtime. We design a neighborhood computation
technique for a region $R$, which avoids generating the same $P'$
multiple times. The main idea is to collect first all neighbors of all
$r\in R$ in a set $N$ and then compute (in one step) $N-P.O-P.D$,
i.e., the set of regions $r$ that minimally expand $R$ to form the
minimal generalizations of $P$. 

{\bf Indexing atomic patterns.}
As another optimization, we employ a prefix-sum index which can help
us to compute an upper bound of the support of $P'$.
The main idea comes from indexes used to compute range-sums in OLAP
\cite{DBLP:conf/sigmod/HoAMS97}.
Let $N$ be the number of atomic regions and $M$ be the number of
atomic timeslots.
Consider a $N\times N \times M$
array $A$, where each cell corresponds to an atomic ODT
triple.
The cell includes a 1 if the corresponding   atomic ODT
triple is a pattern; otherwise the cell includes a  0.
In addition, consider a 3D array $R$ with shape $(N+1)\times (N+1) \times (M+1)$.
Each element $R[i][j][k]$ of $R$ is the sum of all elements
$A[i'][j'][k']$ of $A$, such that $i'\le i$, $j'\le j$, and $k'\le k$
$R[i][j][k]=0$ if any of $i,j,k$ is 0. $R$ is the {\em
  prefix-sum} array of $A$.
Figure \ref{fig:prefix} illustrates the prefix sum 3D array $R$.

\begin{figure}[h]
      \centering
      \includegraphics[width=1.00\columnwidth]{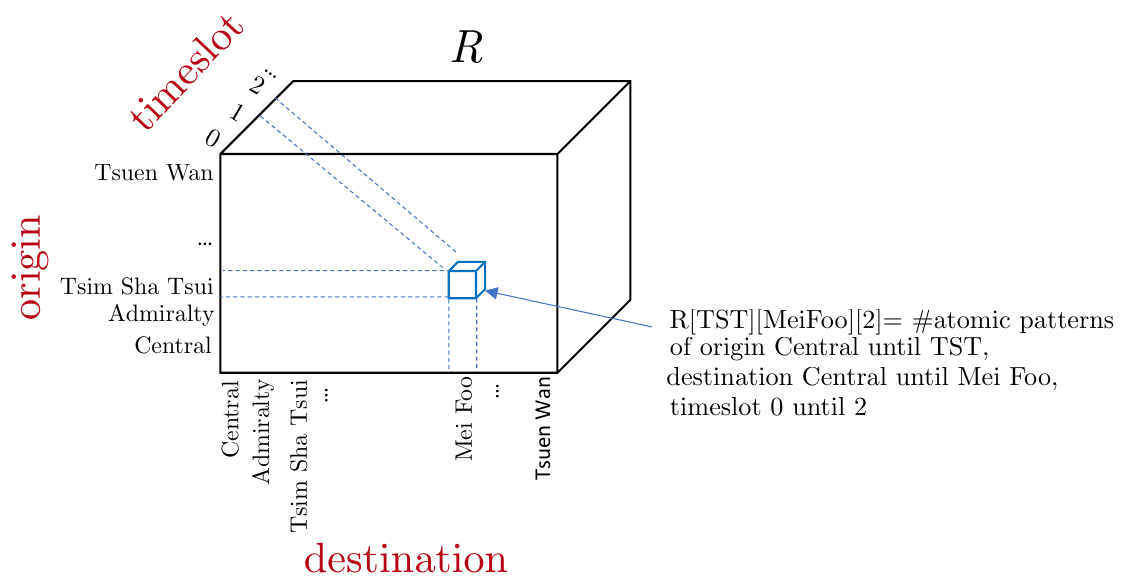}
      \caption{Prefix sum example}
      \label{fig:prefix}
\end{figure}

Now consider a 3D range $[a,b], [c,d], [e,f]$, where $0<a\le b \le N$,
$0<c\le d \le N$, and $0<e\le f \le M$
and assume that the objective is to compute the sum of values in $A$
inside this range.
We can show that this sum can be accumulated by seven computations as
follows:
\begin{align*}
&R[b][d][f]\\
&-R[a-1][d][f]-R[b][c-1][f]-R[b][d][e-1]\\
&+R[a-1][c-1][f]+R[b][c-1][e-1]+R[a-1][d][e-1]\\
&-R[a-1][c-1][e-1]
\end{align*}


Now, consider a $P'$ which needs to be counted. $P'$ includes a set of
atomic origin regions, a set of atomic destination regions, and a set
of atomic timeslots.
The atomic timeslots are guaranteed to be a continuous sublist of
regions in the corresponding dimension of the 3D array $A$, starting,
say, from timeslot $e$ and including up to timeslot $f$.
However, the region sets in $P'$ are not guaranteed to be continuous.
Still, the 3D range $[a,b], [c,d], [e,f]$, where $a$ ($c$) is the
origin (destination) region in $P'$ with the smallest ID and 
$b$ ($d$) is the
origin (destination) region in $P'$ with the largest ID is
guaranteed to be a superset of atomic
triples in $P'$.
Hence, the prefix-sum index can give us in $O(1)$ time an {\em upper
  bound} of the number of atomic patterns in $P'$.
If this upper bound is added to the support of $P$ and the resulting
support is less than $s_r$, then $CandP$ is definitely not a pattern,
so we can avoid counting $P'$.
To maximize the effectiveness of this optimization, we should find a
total order of the regions which preserves as much as possible the
continuity of space, that is, try to give IDs to regions which
follows the region generalization.

\section{Pattern Variants}\label{sec:ext}
In this section, we explore alternative problem definitions and the
corresponding problem solutions that can be more useful than our
general definition in certain problem instances.
In particular, we observe that the number of patterns can be huge even
if relatively small $s_a$ and large
$s_r$ are used.
In addition, setting global
thresholds may not be ``fair'' for some regions which are
under-represented in the data. 
To address these issues, we propose (i) size-bounded patterns, (ii)
constrained-pattern search,
and (iii) rank-based patterns.

\subsection{Size-bounded Patterns}\label{sec:sizebounded}
The first type of constraint that we can put to limit the number of
patterns is on the size of the regions or timeslots in a
pattern. Specifically, we can set an upper bound $B_O$ to $|O|$, i.e., the
number of atomic regions in an origin region of a pattern. Similarly,
we can limit the number of regions in $D$ to at most $B_D$ and the
number of atomic timeslots to at most $B_T$. In effect, this limits
the number of levels that we use for pattern search to $B_O\cdot
B_D\cdot B_T$ and reduces the number of patterns at each level. 

For pattern enumeration, we use the same algorithms and optimizations
discussed in Section \ref{sec:algo}, but with the constraints applied
whenever we expand a pattern to generate the candidate patterns at the
next level.

\subsection{Constrained Patterns}\label{sec:restricted}

Another way to control the number of the patterns, but also focus on
specific regions and/or timeslots that are under-represented in the
entire population is to limit the domain of atomic regions and
timeslots.
Specifically, we give as parameter to the problem the set of
atomic regions $V_O\subseteq V$ that we are interested in to serve as
origins the set $V_D\subseteq V$ of regions that can serve as
destinations, and $T_R\subseteq T$, a restricted contiguous
subsequence of the entire sequence of atomic timeslots $T$ to be used as
timeslots in the patterns.
The induced subgraphs by $V_O$ and $V_D$ should be connected, in order
to potentially  have the entire $V_O$ (and/or  $V_D$) as an origin
(destination) of a pattern.
For example, if a data analyst is interested in flow patterns from South
Manhattan to Queens in afternoon hours, she could include in $V_O$
(resp. $V_D$) all the atomic regions in South
Manhattan (resp. Queens) and restrict the timeslots to be used in
patterns to only afternoon hours.

Recall that thresholds $s_a$ and $s_r$ apply to the set of atomic
triples and ratio of atomic patterns, respectively. Hence, by
constraining $V_O$, $V_D$, and $T_R$, we consider only atomic triples
for the regions (times) of interest, making it possible to detect
patterns that are under-represented in the entire set of atomic
triples. For example, restricting $V_O$ to be a remote district on
the map, makes it possible to detect flow patterns from that district,
which would not be found otherwise, assuming that the outgoing flow
from that district is very small compared to the outgoing flow from
all other districts.

Again, adapting our pattern enumeration algorithm and its variants to
identify constrained patterns is straightforward, as we only have to
(i) confine atomic triples and patterns to include only origins in
$V_O$, destinations in $V_D$, and timeslots in $T_R$, and (ii) limit
the expansion of regions/timeslots in candidate pattern generation,
according to the constraints  $V_O$, $V_D$, and $T_R$.
Depending on the sizes of $V_O$, $V_D$, and $T_R$ pattern enumeration
can be significantly faster compared to unconstrained pattern search.





\subsection{Rank-based patterns}\label{sec:rank}
Another way to control the number of patterns and still not miss the most
important ones is to regard as patterns, at each level,
the $k$ triples with the highest support 
and prune the rest of them as non-patterns.
This is achieved by replacing the minimum ratio threshold $s_r$ by
a parameter $k$, which models the ratio of eligible triples
at each level which are considered to be important.

More formally, let $\mathcal{T}_\ell$ be the set of triples at level
$\ell$, which are minimal generalization of patterns at level $\ell-1$.
The set of patterns $\mathcal{P}_\ell$ at level $\ell$ consists of the 
$k$ triples in $T_{\ell}$ having the largest number of atomic patterns.

\begin{definition} [ODT pattern (rank-based)] \label{def:pattern}
An ODT triple $P$ at level $\ell$ is a rank-based ODT pattern if:
\begin{itemize}
\item there exists a minimal specialization of $P$ which is a
  rank-based ODT pattern  
\item there are no more than $k$ minimal generalizations
  of level-$(\ell-1)$ rank-based ODT patterns that include more frequent
atomic patterns than $P$. 
\end{itemize}  
\end{definition}





\subsubsection{Baseline approach for rank-based pattern enumeration}
\label{sec:baselinerank}
A baseline approach for enumerating rank-based patterns is to
generate all eligible triples at each level $\ell$, which are minimal
generalizations of patterns at level $\ell-1$. For each such triple,
count its support (i.e., number of atomic patterns included in
it). We may use the optimizations proposed in Section \ref{sec:opt},
to reduce the cost of generating ODT triples that are candidate
patterns and counting their supports.
After generating all triples and counting their
supports, we select the top-$k$ ones as patterns.
Only these patterns are used to generate the candidate patterns at
level $\ell +1$.

\subsubsection{Optimized rank-based pattern enumeration}
\label{sec:optrank}
To minimize the number of generated triples at each level $\ell$ and
the effort for counting them, we
examine the patterns at $\ell-1$ in decreasing order of their
potential to generate triples that will end up in the top-$k$
triples at level $\ell$.
Hence, we access the patterns $P$ at level $\ell-1$  in decreasing order
of their support $P$.cnt. For each such pattern $P$ and for each minimal
generalization $CandP$ of $P$, we first compute the potential of
$P'=CandP-P$ to add to the support $CandP$.cnt (initially $CandP$.cnt
= $P$.cnt). If, by adding the maximum possible $P'$.cnt to
$CandP$.cnt, $CandP$.cnt cannot make it to the top-$k$ $\ell$-triples
so far, then we prune $CandP$ and avoid its counting.
The maximum possible $P'$.cnt can be computed based on the following
lemma:

\begin{lemma}\label{lemma:maxsupport}
The maximum possible $P'$.cnt that can be added to $P$.cnt, to derive
the support of $CandP$ is as follows:
\begin{itemize}
  \item If $CandP$ is generated by minimally generalizing $P.O$, then
    $P'$.cnt equals $|P.D|\cdot |P.T|$.    
  \item If $CandP$ is generated by minimally generalizing $P.D$, then
    $P'$.cnt equals $|P.O|\cdot |P.O|$.
  \item If $CandP$ is generated by minimally generalizing $P.T$, then
    $P'$.cnt equals $|P.O|\cdot |P.D|$.   
\end{itemize} 
\end{lemma}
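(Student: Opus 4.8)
The plan is to observe that $P'$.cnt is, by definition, the number of atomic triples inside $P'$ that are atomic patterns, so it can never exceed the total number of atomic triples that $P'$ contains, namely $P'$.card. The stated maximum is therefore just $P'$.card itself, attained in the extreme case where every atomic triple of $P'$ happens to lie in $\mathcal{P}_3$. So the whole claim reduces to computing $P'$.card and showing that it equals the advertised product in each of the three cases.

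First I would recall the algebraic cardinality rule already used in the baseline algorithm of Section~\ref{sec:baseline}: for any ODT triple, the number of atomic triples it contains is the product of the number of atomic regions in its origin, the number of atomic regions in its destination, and the number of atomic timeslots in its time component. Thus $P'$.card $= |P'.O|\cdot|P'.D|\cdot|P'.T|$, and it suffices to evaluate this product for the difference triple $P'=CandP-P$.

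The key step is to pin down the shape of $P'$. By the definition of the difference triple, $P'$ keeps intact the two dimensions of $P$ that were not generalized and retains, in the generalized dimension, only the single newly added atomic element. Hence exactly one of the three factors of $P'$.card equals $1$, and the product collapses to the product of the sizes of the two unchanged dimensions of $P$. A three-way case split then finishes the proof: if the generalization was on $O$ then $|P'.O|=1$ and the bound is $|P.D|\cdot|P.T|$; if it was on $D$ then $|P'.D|=1$ and the bound is $|P.O|\cdot|P.T|$; if it was on $T$ then $|P'.T|=1$ and the bound is $|P.O|\cdot|P.D|$.

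There is no real obstacle here; the only thing to be careful about is the definition of $P'$ as the difference triple, i.e.\ that a minimal generalization adds exactly one atomic element and that $P'$ therefore has cardinality one in the extended dimension while copying $P$ unchanged in the other two. Once that is made explicit the bound is immediate. It is worth emphasizing that this is only an upper bound, used for pruning $CandP$ via $CandP$.cnt $= P$.cnt $+ P'$.cnt, and that it is tight precisely when all component atomic triples of $P'$ are themselves atomic patterns.
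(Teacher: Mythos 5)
Your proof is correct and follows essentially the same route as the paper's: identify that the difference triple $P'$ contains exactly one atomic element in the generalized dimension and copies $P$ in the other two, so the number of atomic triples it can contribute is the product of the sizes of the two unchanged dimensions. One point worth flagging: in the second case your derivation yields $|P.O|\cdot|P.T|$, whereas the lemma (and the paper's own proof) write $|P.O|\cdot|P.O|$; the latter is evidently a typo, since the paper itself states $P'.O=P.O$ and $P'.T=P.T$ in that case, and your version is the correct one.
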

\begin{proof}
 Each of the three cases is proved as follows:
\begin{itemize}
  \item If $CandP$ is generated by minimally generalizing $P.O$, then
    $P'.O$ is an atomic region, $P'.D=P.D$, and $P'.T=P.T$; hence, the
    maximum possible $P'$.cnt equals $|P.D|\cdot |P.T|$.    
  \item If $CandP$ is generated by minimally generalizing $P.D$, then
    $P'.D$ is an atomic region, $P'.O=P.O$, and $P'.T=P.T$; hence, the
    maximum possible $P'$.cnt equals $|P.O|\cdot |P.O|$.
  \item If $CandP$ is generated by minimally generalizing $P.T$, then
    $P'.T$ is an atomic timeslot, $P'.O=P.O$, and $P'.D=P.D$; hence, the
    maximum possible $P'$.cnt equals $|P.O|\cdot |P.D|$. 
\end{itemize} 
\end{proof}

Let $\theta$ be the $k$-th largest support of the 
triples generated so far at level $\ell$.
If for the next examined $P$ from level $\ell-1$ to generalize,
$P$ cannot be generalized to a $CandP$ that may end up in the top-$k$
level-$\ell$ triples, then we can immediately prune $P$. The condition
for pruning $P$ follows:

\begin{lemma}\label{lem:prunep}
If $P$.cnt + $\max\{|P.D|\cdot |P.T|, |P.O|\cdot |P.O|, |P.O|\cdot
|P.D|\}$ $\le\theta$, then no minimal generalization of $P$ can enter
the set of top-$k$ level-$\ell$ ODT triples.
\end{lemma}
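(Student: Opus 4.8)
The plan is to derive the claim as a direct corollary of Lemma \ref{lemma:maxsupport} together with the additive decomposition of supports already exploited by the baseline synthesis step. The whole argument rests on bounding from above the support $CandP$.cnt of an arbitrary minimal generalization of $P$, and then comparing that bound against the current threshold $\theta$.

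First I would fix an arbitrary minimal generalization $CandP$ of $P$ and set $P'=CandP-P$. Recall that $CandP$.cnt $=P$.cnt $+P'$.cnt, because the atomic patterns contained in $CandP$ are exactly those contained in $P$ together with those contained in the difference triple $P'$, and these two collections are disjoint (they differ in the coordinate of the extended dimension) and exhaust $CandP$. A minimal generalization extends exactly one of the three dimensions $O$, $D$, $T$, so $P'$ falls into one of the three cases of Lemma \ref{lemma:maxsupport}; in each case that lemma bounds $P'$.cnt by the product of the cardinalities of the two dimensions left unchanged. Taking the worst case over the three possible directions yields $P'$.cnt $\le \max\{|P.D|\cdot|P.T|,\,|P.O|\cdot|P.T|,\,|P.O|\cdot|P.D|\}$, and hence $CandP$.cnt $\le P$.cnt $+\max\{\ldots\}$, which under the hypothesis of the lemma is at most $\theta$.

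It then remains to argue that a triple whose support is at most $\theta$ cannot belong to the final top-$k$. Here I would use that $\theta$ is the $k$-th largest support among the level-$\ell$ triples generated so far, so at the current moment there already exist at least $k$ triples with support $\ge\theta\ge CandP$.cnt. Crucially, $\theta$ is non-decreasing as enumeration proceeds, since generating additional triples can only raise the $k$-th largest value. Consequently the inequality $CandP$.cnt $\le\theta$ persists against the final threshold as well, so $CandP$ can never displace any of the $k$ triples that already meet or exceed $\theta$; by the definition of a rank-based pattern it therefore cannot enter the top-$k$. As $CandP$ ranged over all minimal generalizations of $P$, none of them can enter the top-$k$, which is precisely the statement.

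I do not expect a serious obstacle, as the lemma is essentially a corollary of the previous one. The two points that require care are (i) making the bound hold uniformly across all three generalization directions, which is exactly why the $\max$ appears in the statement, and (ii) the boundary case $CandP$.cnt $=\theta$: a triple that merely ties the current $k$-th best does not enter the maintained top-$k$ set, and the monotonicity of $\theta$ guarantees this tie can only turn into a strict loss later, so discarding $CandP$ remains safe.
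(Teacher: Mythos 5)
Your proposal is correct and follows the same route as the paper: the paper's own proof is a single sentence that invokes Lemma \ref{lemma:maxsupport}, takes the maximum over the three generalization directions, and stops there. Your write-up actually fills in two things the paper leaves implicit: the additive decomposition $CandP.\text{cnt} = P.\text{cnt} + P'.\text{cnt}$ with the disjointness of the two atomic-triple collections, and --- more importantly --- the observation that $\theta$ is non-decreasing as enumeration proceeds, which is what makes it legitimate to prune against the \emph{current} $k$-th best rather than the final one; the paper never addresses this, so your version is the more complete argument. One discrepancy worth flagging: you bound the destination-direction case by $|P.O|\cdot|P.T|$, whereas the lemma as printed (and Lemma \ref{lemma:maxsupport}) says $|P.O|\cdot|P.O|$. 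Your bound is the right one --- when $P.D$ is minimally generalized, $P'$ contains $|P.O|\cdot 1\cdot|P.T|$ atomic triples, exactly as the reasoning inside the paper's own proof of Lemma \ref{lemma:maxsupport} and the corresponding check in Algorithm \ref{algo:rankbased} indicate --- so the printed $|P.O|\cdot|P.O|$ appears to be a typo propagated through both lemma statements, and strictly speaking the hypothesis as written would not dominate your (correct) bound whenever $|P.T|>|P.O|$.
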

\begin{proof}
  The proof stems directly from Lemma \ref{lemma:maxsupport}.
  Any candidate pattern $CandP$ which is a minimal generalization of $P$
  belongs to one of the three cases above. Hence, the maximum possible
  support for $CandP$  is $P$.cnt plus the maximum of
  the three products that $P'$.cnt can be.
\end{proof}

\begin{algorithm}
\begin{algorithmic} [1]
 \scriptsize
\Require a region graph $G(V,E)$; a trips table;
a minimum support $s_a$ for atomic ODT patterns; number $k$ of top
patterns to be generated at each level; maximum level considered ($maxl$) 
\State $\mathcal{T}_3$ = atomic triples computed from trips table
\State $\mathcal{P}_3$ = triples in $\mathcal{T}_3$ with support $\geq s_a$
\For{all atomic triples $P\in \mathcal{T}_3$}
       \State $P$.cnt = 1 if $P\in \mathcal{P}_3$, else $P$.cnt=0
 \EndFor   
\State $\ell$ = 3
\While {$|\mathcal{P_\ell}| > 0$ and $\ell< maxl$}  \Comment{extend
  level-$\ell$ patterns}          
     \State $\mathcal{P}_{\ell+1}$ = $\emptyset$   \Comment{Initialize
       $k$-minheap with level-$(\ell+1)$ patterns}
     \For{each $P$ in $\mathcal{P}_\ell$ in decreasing order of
       $P$.cnt}
         \If {$|\mathcal{P}_{\ell+1}|=k$ and $P$.cnt+$\max\{|P.D|\cdot |P.T|, |P.O|\cdot |P.O|, |P.O|\cdot
           |P.D|\}\le \mathcal{P}_{\ell+1}$.top.cnt}
         \State {\bf continue} \Comment{Prune $P$ based on Lemma \ref{lem:prunep}}
         \EndIf
         \If {$|\mathcal{P}_{\ell+1}|=k$ and $P$.cnt+ $|P.D|\cdot
           |P.T|\le \mathcal{P}_{\ell+1}$.top.cnt}
            \For{each minimal generalization $CandP$ of $P$ by origin}
               \If{$CandP$ not considered before} \label{lin:st}
                  \State $P'$= $CandP-P$
                  \State$CandP$.cnt = $P$.cnt + $P'$.cnt
                  \If{$|\mathcal{P}_{\ell+1}|<k$}
                     \State add $CandP$ to $\mathcal{P}_{\ell+1}$
                  \Else
                     \If{$CandP$.cnt $>\mathcal{P}_{\ell+1}$.top.cnt}
                     \State update $\mathcal{P}_{\ell+1}$ with $CandP$
                     \EndIf
                  \EndIf
               \EndIf \label{lin:end}
            \EndFor
          \EndIf
          \If {$|\mathcal{P}_{\ell+1}|=k$ and $P$.cnt+ $|P.O|\cdot
           |P.T|\le \mathcal{P}_{\ell+1}$.top.cnt}
            \For{each minimal generalization $CandP$ of $P$ by
              dest.}
            \State Lines \ref{lin:st} to \ref{lin:end} above 
            \EndFor
            \EndIf
          \If {$|\mathcal{P}_{\ell+1}|=k$ and $P$.cnt+ $|P.O|\cdot
           |P.D|\le \mathcal{P}_{\ell+1}$.top.cnt}
            \For{each minimal generalization $CandP$ of $P$ by
              time}
            \State Lines \ref{lin:st} to \ref{lin:end} above 
            \EndFor
         \EndIf  
     \EndFor
     \State $\ell$ = $\ell$ + 1       
\EndWhile                                 
\end{algorithmic}
\caption{Optimized Algorithm for enumerating rank-based ODT patterns}
\label{algo:rankbased}
\end{algorithm}

Based on the above lemmas, we can prove the correctness of our
enumeration algorithm for rank-based ODT patterns, described by
Algorithm \ref{algo:rankbased}.
The algorithm computes first all level-$3$ patterns $\mathcal{P}_3$, based on the
atomic pattern support threshold $s_a$ (Lines 3--5).
Having the patterns at level $\ell$, the algorithm organizes those at
level $\ell + 1$ in a priority queue (minheap)  $\mathcal{P}_{\ell+1}$ of
maximum size $k$.
We consider all patterns $P$ at level $\ell$ in decreasing order of
support  $P$.cnt, to maximize the potential of generating level-$(\ell
+ 1)$ triples of high support early.
For each such pattern $P$, we first check if $P$ can
generate any level-$(\ell+1)$ triple that can enter the set $\mathcal{P}_{\ell+1}$ of
top-$k$ triples so far at level $\ell+ 1$, based on Lemma
\ref{lem:prunep}.
If this is not possible, then $P$ is pruned.
Otherwise, we attempt to generalize $P$, first by adding an atomic
region to $P.O$. If the maximum addition to $P$.cnt by such an
extension cannot result in a $CandP$ that can enter the top-$k$ at
level $\ell+ 1$ (based on Lemma \ref{lemma:maxsupport}), then we do
not attempt such extensions; otherwise we try all such extensions and
measure their supports (Lines \ref{lin:st} to \ref{lin:end}).
We repeat the same for the possible extensions of $P.D$ and $P.T$.
After $\mathcal{P}_{\ell+1}$ has been finalized, we use it to generate
the top-$k$ patterns at the next level.
Since the number of levels for which we can generate patterns can be
very large, Algorithm \ref{algo:rankbased} takes as a parameter the
maximum level $maxl$ for which we are interested in generating patterns.

\section{Experiments}\label{sec:exps}
In this section, we evaluate the performance of our proposed
algorithms on real datasets.
All methods were implemented in Python3 and the experiments were run on
a Macbook Air with a M2 processor and 16GB memory. The source code of
the paper is publicly
available\footnote{https://github.com/kosyfakichry/spatiotemporalflowpatterns}.

\subsection{Dataset Description}
For our experiments, we used three real datasets; NYC
taxi trips, a metro network trips
and Flights.
Below,
we provide a detailed description for each of them.

\stitle{NYC taxi trips:} We processed 7.5M trips of yellow taxis in NYC
in January 2019, downloaded from TLC%
\footnote{https://www.nyc.gov/site/tlc/about/tlc-trip-record-data.page}.
Each record represents a taxi trip and
includes the pick-up and drop-off taxi zones
(different regions in NYC), the date/time of the pick-up, and the
number of passengers who took the trip.
We converted all time moments to 48 time-of-day slots (one slot per
30min intervals in the 24h). Then, we aggregated the data by merging
all trips having  the same origin, destination, and
timeslot, and summing up the total number of passengers in all these
trips to a total passenger flow, as explained in Section
\ref{sec:def}. This way, we ended up having 373460 unique ODT combinations
 (atomic ODT triples), which we used as input to our pattern
 enumeration algorithms.
In addition, we used the maps posted at the same website to construct
the neighboring graph $G$ between the atomic regions (taxi zones).
In $G$, we connected all pairs of atomic regions that share boundary
points or are separated by water boundaries.

		


\stitle{Metro trips:} We extract data from a metro network.
The system consists of 168
stations, serving a number of areas. We consider each station as an
atomic region; we created the neighborhood graph $G$ for them by
linking stations that are next to each other in the network.  The data
are aggregated for all passenger trips taken in September
2019. Specifically, for each atomic ODT triple, where the origin and
destination are stations and T is one of the 48 atomic timeslots, we
have the total number of passenger trips in Sep. 2019. The total
number of atomic  ODT triples is 253497.


\stitle{Flights:} We extracted information for 5.8M US flights in 2015 from Kaggle%
\footnote{https://www.kaggle.com/datasets/usdot/flight-delays?select=flights.csv}.
In
this dataset, we consider as atomic regions 319 airports in North
America that appear in the file.
Since the number of passengers in each flight was not given in the original data, we randomly generated a number between 50 and 200.
We followed the same procedure as in for the two previous
datasets; namely, we converted the original flights data into a table with
atomic ODT triples.
The total number of resulting ODT triples is 17623.
To
create the neighbor graph $G$, we
follow the same logic as the two previous datasets; we connect
atomic regions in neighboring states. 

\begin{figure*}[t!]
  \centering
\vspace{-2mm}
 \subfigure[Taxi Network]{
   \label{fig:exp:sataxi}
    \includegraphics[width=0.3\textwidth]{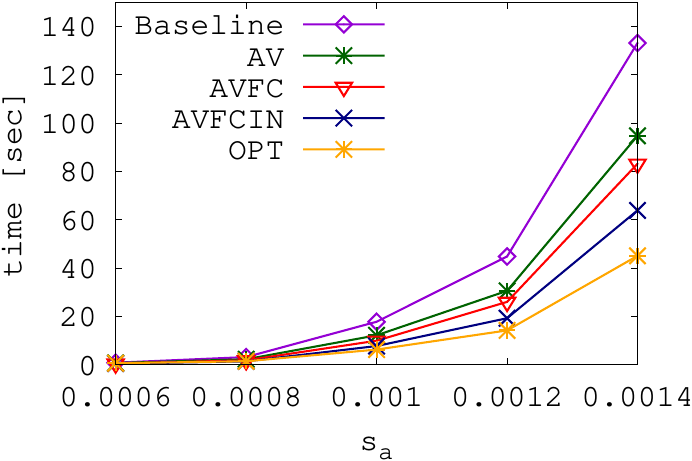}
    }\!\!
  \subfigure[Metro Network]{
   \label{fig:exp:samtr}
    \includegraphics[width=0.3\textwidth]{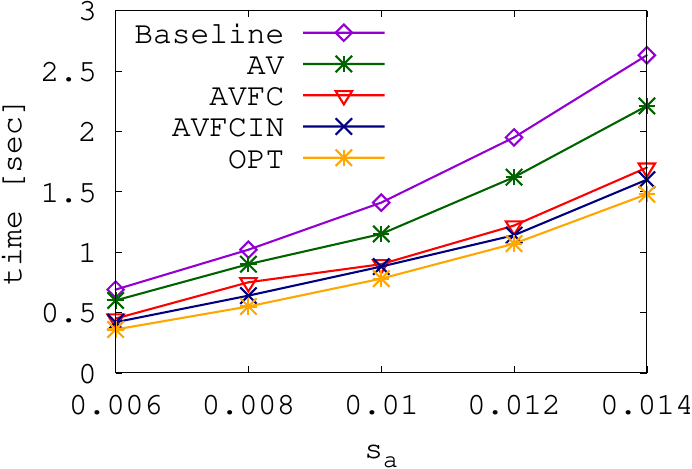}
    }\!\!
  \subfigure[Flights Network]{
   \label{fig:exp:saflights}
   \includegraphics[width=0.3\textwidth]{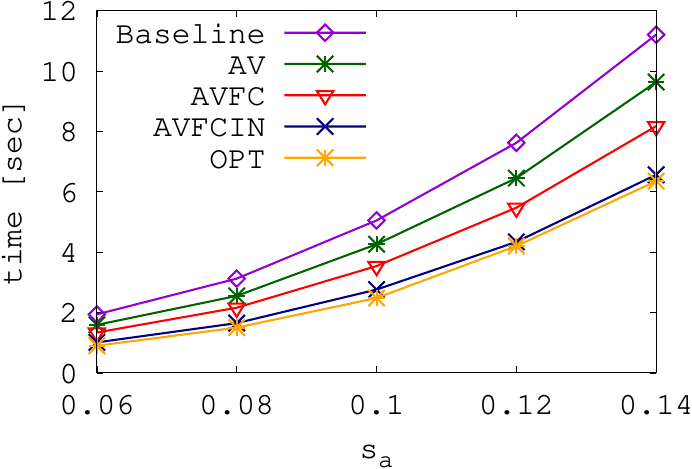}
    }
  \vspace{-0.1in}
  \caption{Pattern enumeration runtime, $s_r=0.5$, varying $s_a$}
  \label{fig:exp:time}
\end{figure*}

\begin{figure*}[t!]
  \centering
 \subfigure[Taxi Network]{
   \label{fig:exp:taxitime}
    \includegraphics[width=0.3\textwidth]{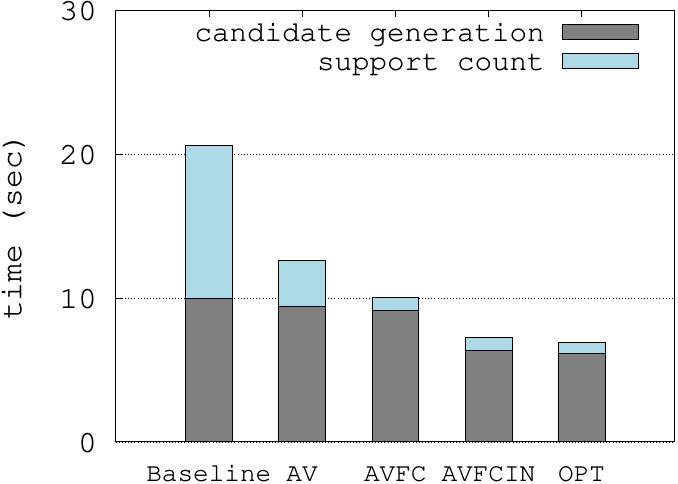}
    }\!\!
  \subfigure[Metro Network]{
   \label{fig:exp:mtrtime}
    \includegraphics[width=0.3\textwidth]{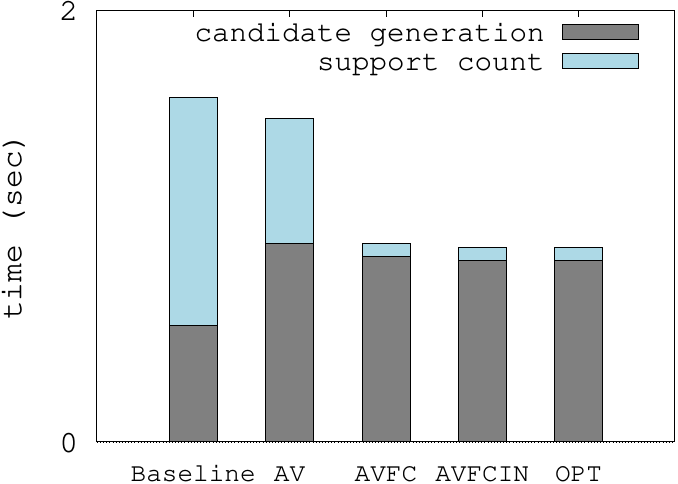}
    }\!\!
  \subfigure[Flights Network]{
   \label{fig:exp:flightstime}
   \includegraphics[width=0.3\textwidth]{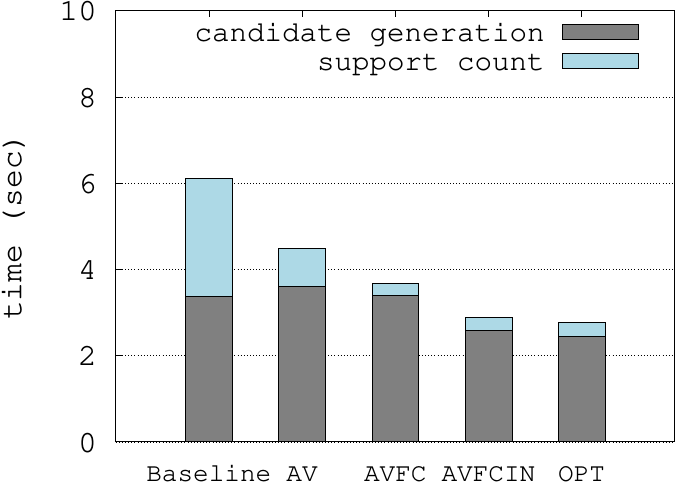}
    }
  \vspace{-0.1in}
  \caption{Pattern enumeration cost breakdown, $s_r=0.5$, default $s_a$}
  \label{fig:exp:timebreakdown}
\end{figure*}

\begin{figure*}[t!]
  \centering
\vspace{-2mm}
 \subfigure[Taxi Network]{
   \label{fig:exp:srtaxi}
    \includegraphics[width=0.3\textwidth]{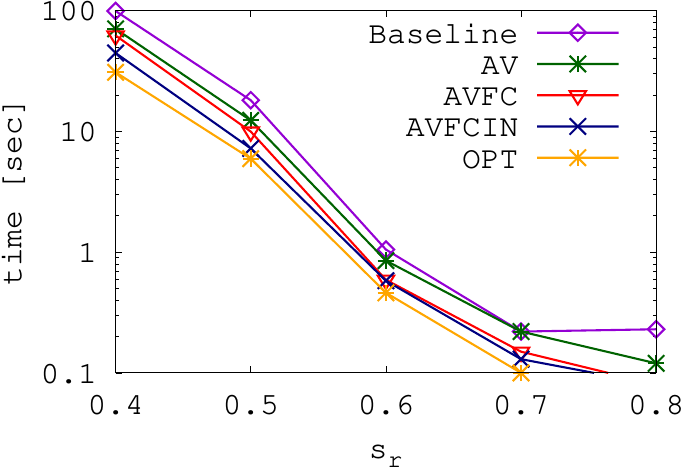}
    }\!\!
  \subfigure[Metro Network]{
   \label{fig:exp:srmtr}
    \includegraphics[width=0.3\textwidth]{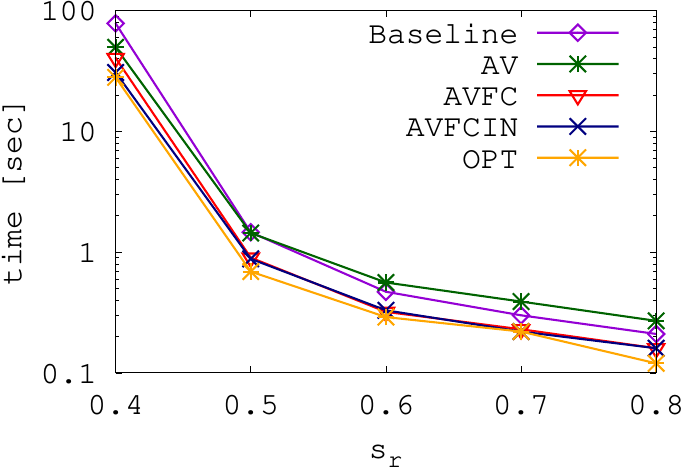}
    }\!\!
  \subfigure[Flights Network]{
   \label{fig:exp:srflights}
   \includegraphics[width=0.3\textwidth]{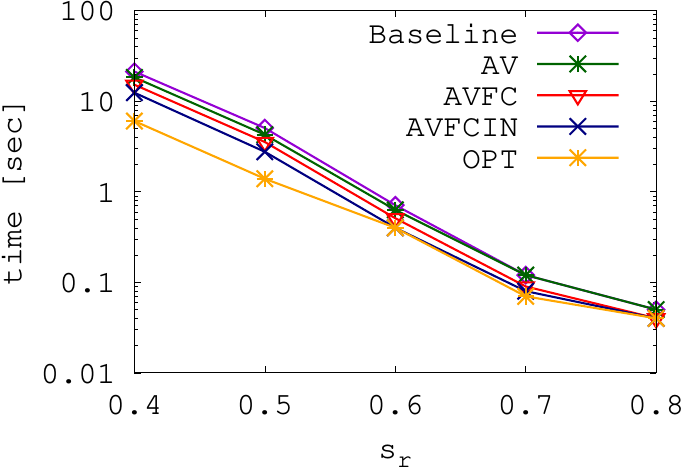}
    }
  \vspace{-0.1in}
  \caption{Pattern enumeration runtime, default $s_a$, varying $s_r$}
  \label{fig:exp:sr}
\end{figure*}

\begin{figure*}[t!]
  \centering
\vspace{-2mm}
 \subfigure[Taxi Network]{
   \label{fig:exp:taxipatterns}
    \includegraphics[width=0.3\textwidth]{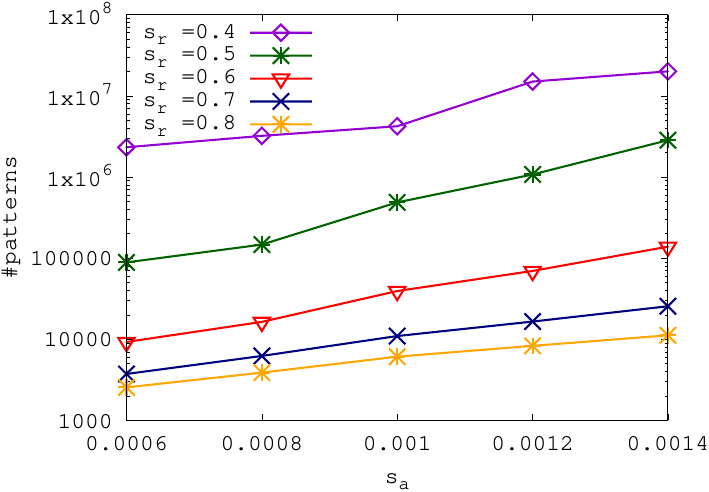}
    }\!\!
  \subfigure[Metro Network]{
   \label{fig:exp:mtrpatterns}
    \includegraphics[width=0.3\textwidth]{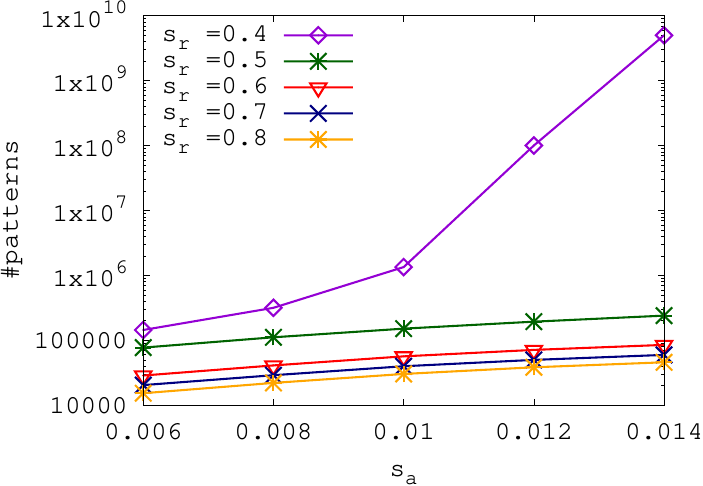}
    }\!\!
  \subfigure[Flights Network]{
   \label{fig:exp:flightspatterns}
   \includegraphics[width=0.3\textwidth]{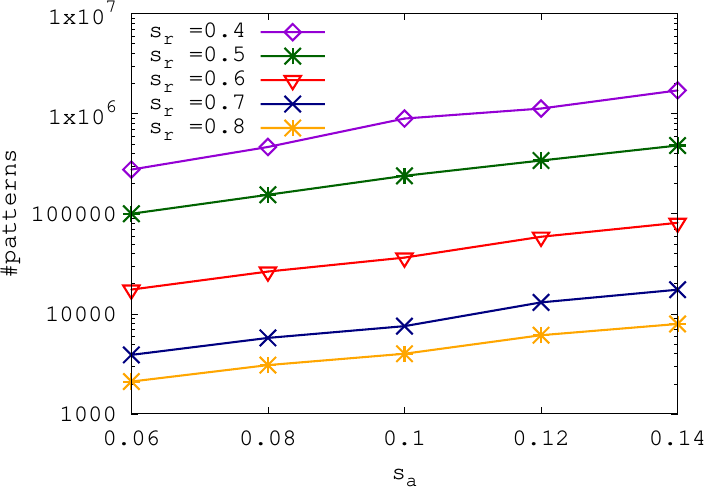}
    }
  \vspace{-0.1in}
  \caption{Number of patterns for different values of $s_a$ and $s_r$}
  \label{fix:exp:patterns}
\end{figure*}


\begin{figure*}[t!]
  \centering
\vspace{-2mm}
 \subfigure[Taxi Network]{
   \label{fig:exp:srctaxi}
    \includegraphics[width=0.3\textwidth]{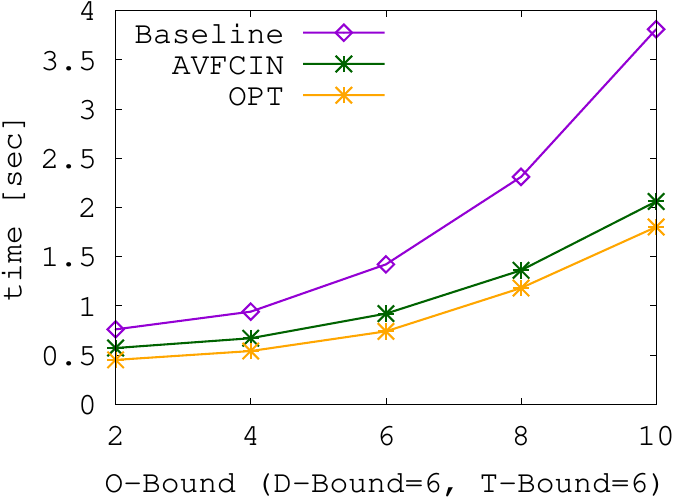}
    }\!\!
  \subfigure[Metro Network]{
   \label{fig:exp:srcmtr}
    \includegraphics[width=0.3\textwidth]{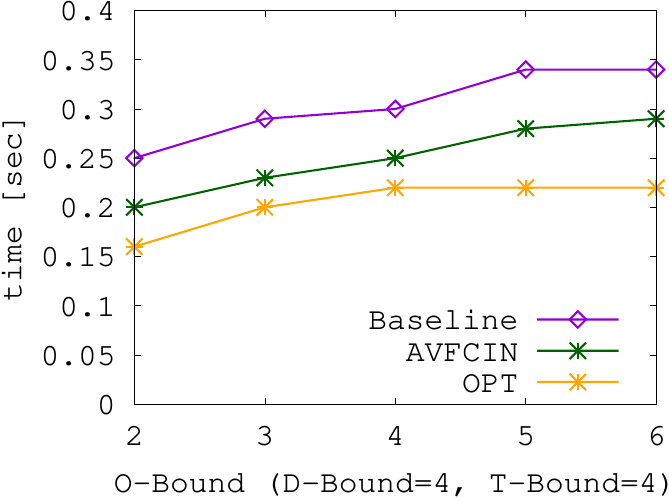}
    }\!\!
\subfigure[Flights Network]{
   \label{fig:exp:srcflights}
   \includegraphics[width=0.3\textwidth]{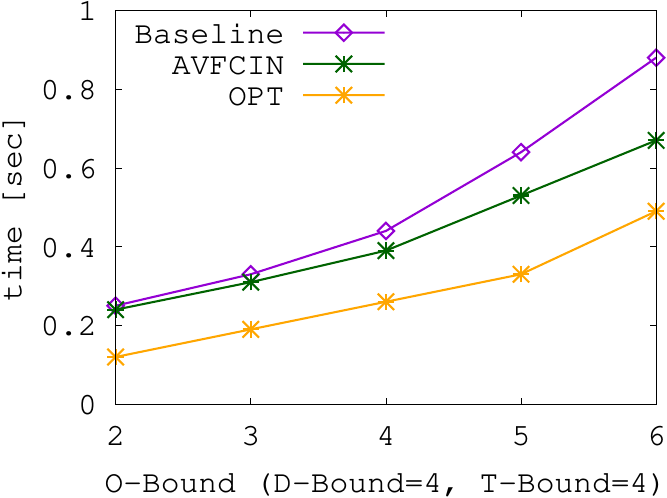}
    }
  \vspace{-0.1in}
  \caption{Bounded pattern enumeration runtime, default $s_a$, $s_r$,
    varying origin bound}
  \label{fig:exp:src}
\end{figure*}

\begin{figure*}[t!]
  \centering
\vspace{-2mm}
 \subfigure[Taxi Network]{
   \label{fig:exp:desttaxi}
    \includegraphics[width=0.3\textwidth]{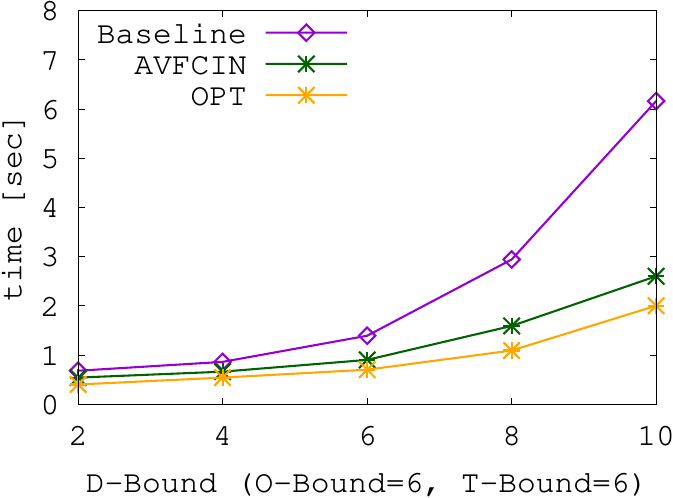}
    }\!\!
  \subfigure[Metro Network]{
   \label{fig:exp:destmtr}
    \includegraphics[width=0.3\textwidth]{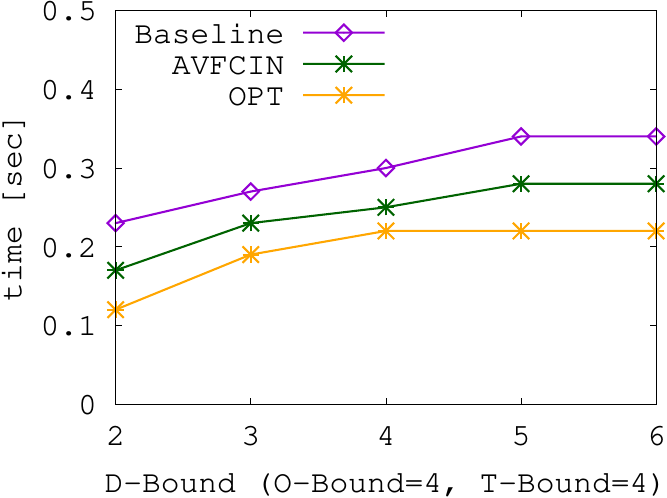}
    }\!\!
  \subfigure[Flights Network]{
   \label{fig:exp:destflights}
   \includegraphics[width=0.3\textwidth]{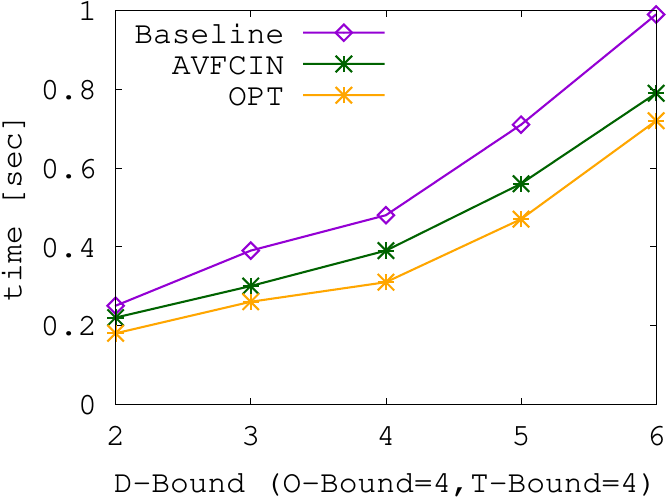}
    }
  \vspace{-0.1in}
  \caption{Bounded pattern enumeration runtime, default $s_a$, $s_r$,
    varying destination bound}
  \label{fig:exp:dest}
\end{figure*}

\begin{figure*}[t!]
  \centering
\vspace{-2mm}
 \subfigure[Taxi Network]{
   \label{fig:exp:timetaxi}
    \includegraphics[width=0.3\textwidth]{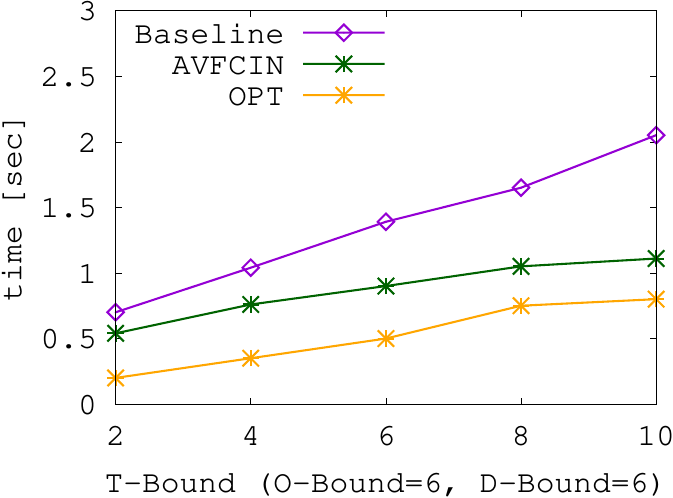}
    }\!\!
  \subfigure[Metro Network]{
   \label{fig:exp:timemtr}
    \includegraphics[width=0.3\textwidth]{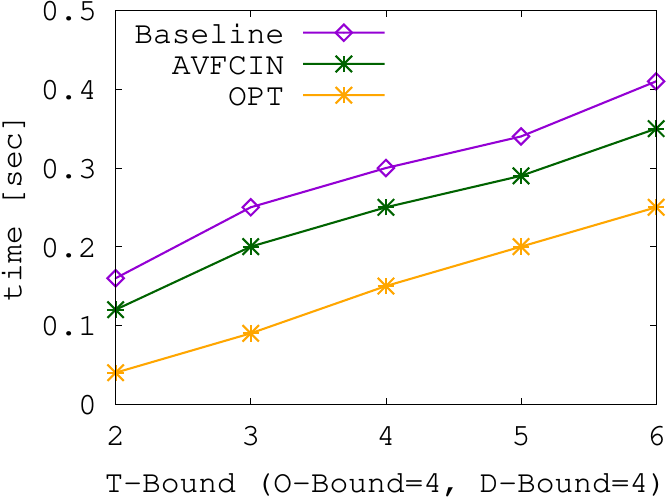}
    }\!\!
  \subfigure[Flights Network]{
   \label{fig:exp:timeflights}
   \includegraphics[width=0.3\textwidth]{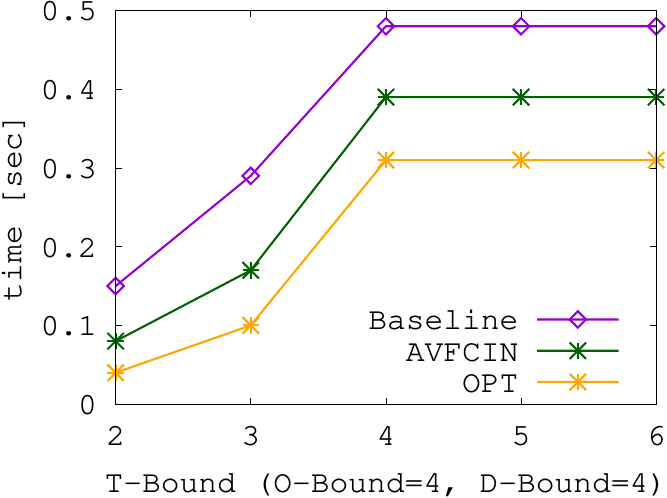}
    }
  \vspace{-0.1in}
  \caption{Bounded pattern enumeration runtime, default $s_a$, $s_r$,
    varying timeslot bound}
  \label{fig:exp:timec}
\end{figure*}


\begin{figure*}[t!]
  \centering
\vspace{-2mm}
 \subfigure[Taxi Network]{
   \label{fig:exp:taxilevel}
    \includegraphics[width=0.3\textwidth]{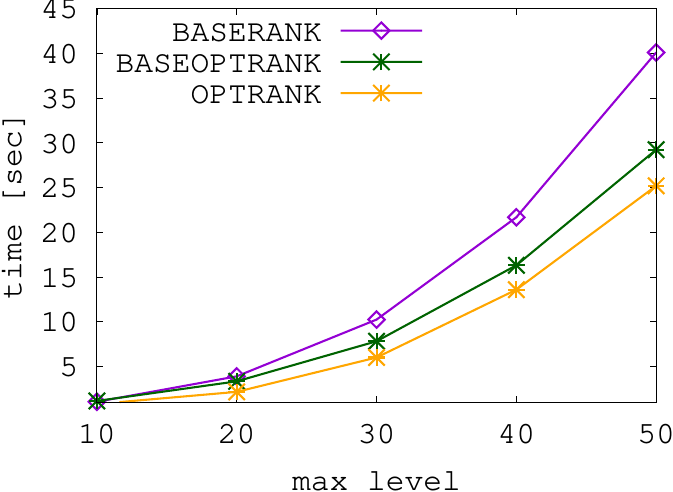}
    }\!\!
  \subfigure[Metro Network]{
   \label{fig:exp:mtrlevel}
    \includegraphics[width=0.3\textwidth]{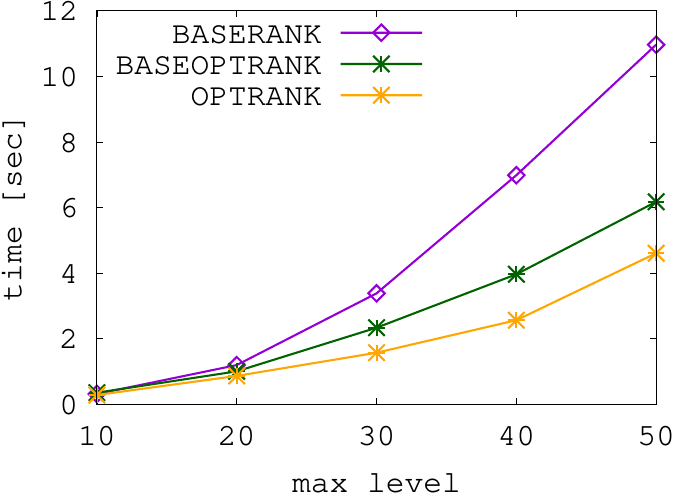}
    }\!\!
  \subfigure[Flights Network]{
   \label{fig:exp:flightslevel}
   \includegraphics[width=0.3\textwidth]{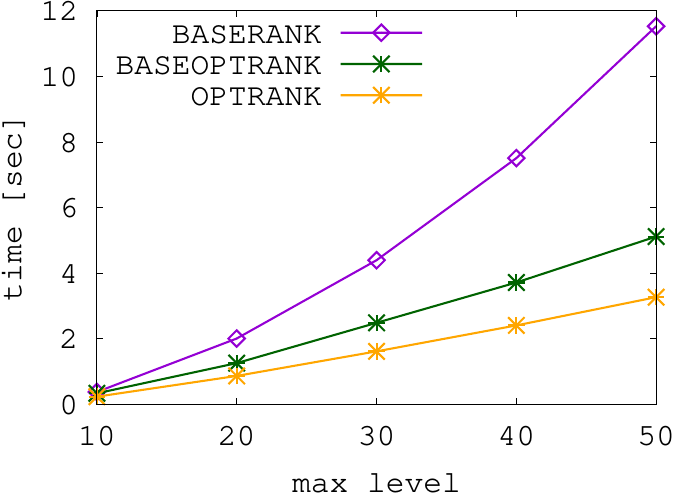}
    }
  \vspace{-0.1in}
  \caption{Rank-based pattern enumeration, $s_a=0.1$, $k=3000$, varying $maxl$}
  \label{fig:exp:level}
\end{figure*}

\begin{figure*}[t!]
  \centering
\vspace{-2mm}
 \subfigure[Taxi Network]{
   \label{fig:exp:taxik}
    \includegraphics[width=0.3\textwidth]{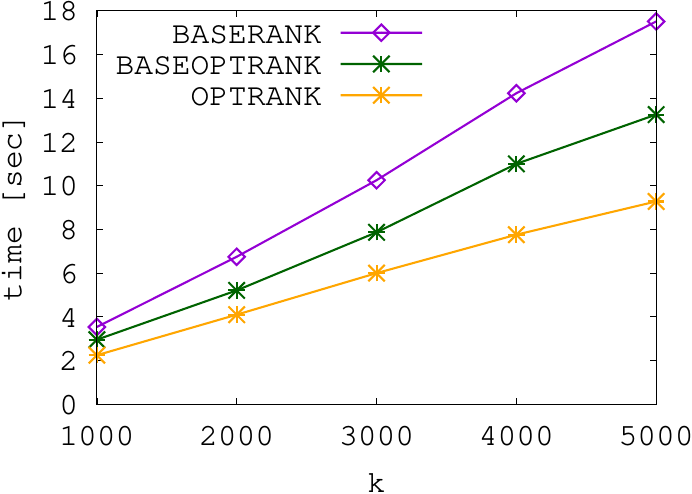}
    }\!\!
  \subfigure[Metro Network]{
   \label{fig:exp:mtrk}
    \includegraphics[width=0.3\textwidth]{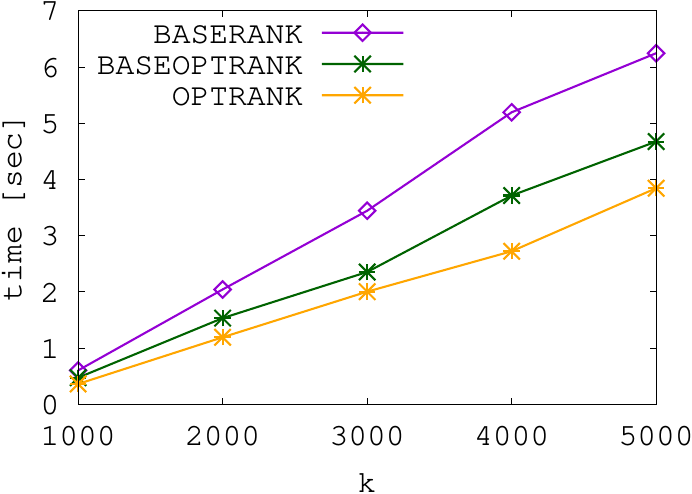}
    }\!\!
  \subfigure[Flights Network]{
   \label{fig:exp:flightsk}
   \includegraphics[width=0.3\textwidth]{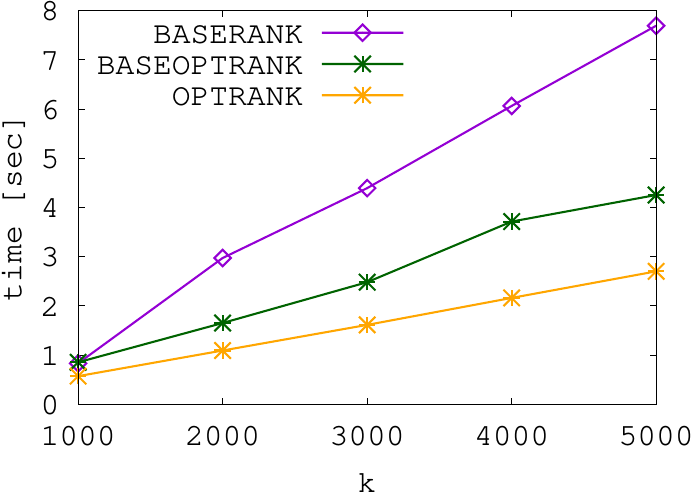}
    }
  \vspace{-0.1in}
  \caption{Rank-based pattern enumeration, $s_a=0.1$, $maxl=30$, varying $k$}
  \label{fig:exp:k}
\end{figure*}

\subsection{Pattern enumeration}
We start by evaluating the performance of our
baseline pattern enumeration algorithm, described in Section
\ref{sec:baseline}, and its optimizations, described in Section \ref{sec:opt}.
Specifically, we compare the performance of the following methods:

  \begin{itemize}
  \item Algorithm \ref{algo:baseline}, denoted by Baseline.
  \item Algorithm \ref{algo:baseline}  with the avoid recounting $P'$
    optimization, denoted by AV. 
    \item Algorithm \ref{algo:baseline}  with the avoid recounting $P'$
      and fast check for zero support of $P'$ optimizations, denoted by AVFC.
    \item Algorithm \ref{algo:baseline}  with the avoid recounting $P'$,
      fast check, and improved neighborhood optimizations, denoted by AVFCIN.
     \item Algorithm \ref{algo:baseline}  with all four optimizations,
       denoted by OPT.
    \end{itemize}

Figure \ref{fig:exp:time} shows the costs of all tested methods on the three datasets
for various values of $s_a$ (default $s_a=0.001$ for Taxi, 
$s_a=0.01$ for Metro, and $s_a=0.1$ for Flights), while keeping $s_r$ fixed to 0.5. Observe
that the optimizations pay off, since the initial cost of the baseline
approach drops to about 50\% of the initial cost. When comparing
between the different optimizations, we observe that the ones that
have the biggest impact are the $P'$ counting avoidance and the
improved neighborhood computation. The savings by the prefix sum
optimization are not impressive, because the
other optimizations already reduce a lot the number of candidates for which
exact counting is required.

This assertion is confirmed by the cost-breakdown experiment shown in
Figure \ref{fig:exp:timebreakdown}, where for the default values of $s_a$
and $s_r$, we show the fraction of the cost that goes to candidate
pattern generation and support counting. Note that the baseline
approach spends most of the time in pattern counting, as the candidate
generation process is quite simple. On the other hand, the optimized
versions of the algorithm trade off time for pattern generation (spent
on bookkeeping all generated triples at each level, bookkeeping OD pairs
with at least one trip, etc.) to reduce the time spent on support
counting. Note that the ratio of the time spent on support counting is
eventually minimized. When comparing between the different versions,
we observe that the candidate generation time drops as more
optimizations are employed (e.g., fast check for zero support).
Since finding all patterns at level $\ell$ requires
considering all possible extensions of patterns at level $\ell-1$, we
note that there is little room for further reducing the cost of ODT pattern
enumeration; in this respect OPT is the best approach that one can
apply if the goal is to find all ODT patterns.

Figure \ref{fig:exp:sr} shows the runtime cost of pattern enumeration
for different values of $s_r$, by keeping $s_a$ to its default
value. Observe that the cost explodes for values of $s_r$ smaller than
$0.5$. The reason is that small $s_r$ values make it easy for 
triples at each level to be characterized as patterns, which, in turn,
greatly increases the number of candidates and patterns at the next
level. On the other hand, for $s_r\ge 0.5$ at least half of the atomic
triples in a candidate must be atomic patterns, which restricts
the number of candidates and patterns at all levels.

The next experiment proves the pattern explosion for small values of
$s_r$.
The high cost of pattern enumeration stems from the fact that a very
large number of patterns are found at each level, which, in turn, all
have to be minimally generalized due to the weak monotonicity property
of Definition \ref{def:pattern}. Figure \ref{fix:exp:patterns}
shows
the numbers of enumerated patterns for different values of $s_a$ and
$s_r$. As the number of patterns grow, so does the essential cost of
candidate generation, which becomes the dominant cost factor.
From Figure \ref{fix:exp:patterns}, we observe that the number of
enumerated patterns is very sensitive to $s_r$. Specifically, for
values of $s_r$ smaller than $0.5$ the number of patterns explode. On
the other hand, the sensitivity to $s_a$ is relatively low. Still,
even for the default values of $s_a$ (0.001 for Taxi and 0.01 for Metro
and Flights) there are thousands or even millions of qualifying
patterns. Such huge numbers necessitate the use of constraints or
ranking in order to limit the number of patterns, focusing on the most
important ones.  

\subsection{Bounded patterns}


As discussed in Section \ref{sec:sizebounded}, one way to limit the
number of patterns is to bound the number of atomic regions and/or
atomic timeslots in them. In the next experiment, we study the effect
of such pattern size constraints to the runtime of algorithms
Baseline, AVFCIN, and OPT.
We run experiments by setting $s_a$ and $s_r$ to their default values.
In each experiment, we set a fixed upper bound to the sizes of two of O, D,
and T, and vary the bound of one. Hence, in Figure \ref{fig:exp:src},
we keep the upper size bounds of D and T fixed and we vary the upper size bound of
O; in Figure \ref{fig:exp:dest},
we keep the upper size bounds of O and T fixed and we vary the upper size bound of
D; in Figure \ref{fig:exp:timec},
we keep the upper size bounds of O and D fixed and we vary the upper
size bound of T. In general, the cost increases as
one bound increases, which is as expected, because the number of
patterns and generated candidates increases as well. On certain
datasets (e.g., Metro), the cost growth is slow when the bound of O or D
is increased; this is due to the fact that the number of patterns at
low levels is already quite small and the generated patterns start to
decrease as we change levels, so the bound increase does not affect
the cost significantly. On the other
hand, when the bound of T increases (Figure \ref{fig:exp:timec}),
there is a stable increase of time in all datasets. This is due to the
fact that the number of atomic timeslots is significantly small and
neighboring timeslots are highly correlated in terms of flow.
When comparing the costs of Baseline, AVFCIN, and OPT, we observe that
OPT maintains a significant performance advantage for different bound
values, especially on Metro.


\subsection{Rank-based patterns}

We now evaluate the performance of rank-based
pattern enumeration, described in Section \ref{sec:rank}.
We compare three algorithms.
The first one is the baseline approach
described in Section \ref{sec:baselinerank}, without the pattern
enumeration optimizations described in Section \ref{sec:opt}.
The second one is the baseline approach of Section
\ref{sec:baselinerank}
with the pattern
enumeration optimizations described in Section \ref{sec:opt}.
The third approach is the optimized algorithm for rank-based patterns
described in Section \ref{sec:optrank}. The three approaches are
denoted by BASERANK, BASEOPTRANK, and OPTRANK, respectively.

Figure \ref{fig:exp:level}
 shows the runtime cost of the three algorithms for
$s_a=0.1$ and $k=3000$ patterns per level, as a function of
the maximum level $maxl$ of patterns that we generate and enumerate. Recall
that the top-$k$ patterns selected per level may generate numerous
triples at the next level and there is no $s_r$ threshold to reduce
them, so the number of levels can become too large. We use $maxl$ as a
parameter for limiting the sizes of patterns.
As shown in the figure, OPTRANK maintains a large advantage over the
other approaches which do not take advantage of the pruning conditions
and the ranking of generated triples.
Figure \ref{fig:exp:k} shows the runtime cost of the algorithms for $s_a=0.1$
and various
values of $k$, after setting $maxl=30$. The advantage of  OPTRANK over
the other algorithms is not affected by $k$.
Overall, despite the fact that a very high value of $s_a$ is used, due
to the fact that the number of patterns per level is limited by $k$,
all algorithms are scalable, making pattern enumeration practical,
even in cases where the number of possible ODT combinations is huge.

\subsection{Use cases}
Finally, we explored the use of ODT patterns in real applications.
We restricted the origin and time dimensions,
according to Section \ref{sec:restricted},
and identified the most popular
(generalized) destinations.

Table \ref{table:usecase2} shows some of these 
patterns in the Taxi dataset.
We first restricted O to be GreenPoint, Brooklyn and T to peak hour morning
timeslots. This gave us as most popular destinations, extended region
Williamsburg East and South and extended region
\{Williamsburg E, Williamsburg S,
Williamsburg NS, Williamsburg SS\}.
In afternoon peak hours people from a central region in Manhattan
(Midtown South) tend to move to neighboring central regions (MidTown
Centre, MidTown East, Times Square, Murray Hill).
Overall, based on our study, most people move within their borough to
relatively near destinations (possibly due to high taxi fares).

\begin{table}[ht]
\caption{Use case - Taxi Dataset}
\vspace{-0.2cm}
\centering
 \scriptsize
\begin{tabular}{|@{~}c@{~}|@{~}c@{~}|@{~}c@{~}|}
\hline
Origin &Timeslots & popular destinations\\
\hline
  \hline
GreenPoint&[8:30-9:30]& WilliamsbE, WilliamsbS\\
GreenPoint&[8:30-9:30]& WilliamsbE,WilliamsbS,WilliamsbNS,WilliamsbSS\\
\hline
Midtown South&[17:30-18:30]&MidTownCentre, MidTownEast\\
Midtown South&[17:30-18:30]&MidTownCentre,MidTownEast,TimesSquare,MurrayHill\\[0.2ex]  
\hline
\end{tabular}
\label{table:usecase2}
\end{table}

\section{Conclusions}\label{sec:conclusion}
In this paper we have studied the problem of enumerating
origin-destination-timeslot (ODT) patterns of varying granularity from a
database of trips.
To our knowledge, this is the first work that formulates and studies
this problem. Due to the huge number of region-time combinations that
can formulate a candidate pattern, the problem is hard. We explore
the problem space level-by-level, building on a weak monotonicity
property of patterns. We propose a number of optimizations that
greatly reduce the cost of the baseline pattern enumeration algorithm.
To reduce the possibly huge number of ODT patterns, which take too
long to enumerate and analyze, we propose practical variants of the
mining problem, where we restrict the size of patterns
and/or the region/timeslots included in them.
In addition, we suggest the interesting definition of rank-based
patterns and we study their efficient enumeration.
Experiments with three real datasets demonstrate the effectiveness of
the proposed techniques. In the future, we plan to study the
relationships between patterns at different levels/granularity and
alternative definitions of interesting ODT patterns.

\bibliographystyle{abbrv}
\bibliography{references}

\end{document}